\newif\ifdraft
 \theoremstyle{definition}
\newtheorem{definition}{Definition}
 \theoremstyle{property}
\newtheorem{property}{Property}
\theoremstyle{theorem}
\newtheorem{theorem}{Theorem}
\theoremstyle{assumption}
\newtheorem{assumption}{Assumption}
\theoremstyle{lemma}
\theoremstyle{remark}
 \theoremstyle{proposition}
\newcolumntype{M}[1]{>{\centering\arraybackslash}m{#1}}
\newcommand*{\rom}[1]{\expandafter\@slowromancap\romannumeral #1@}
\begin{document}
%
\title{A Concurrent Switching Model for\\Traffic Congestion  Control}

\author{Hossein Rastgoftar, Xun Liu, and Jean-Baptiste Jeannin
\thanks{H. Rastgoftar is with the Department of Aerospace and Mechanical Engineering at the University of Arizona, Tucson, AZ 85721, USA {\tt\small hrastgoftar@arizona.edu}}
\thanks{X. Liu is with the Department of Mechanical Engineering at Villanova University, Villanova, PA 19085, USA {\tt\small xliu8@villanova.edu}}
\thanks{J.-B. Jeannin is with the Department
of Aerospace Engineering, University of Michigan, Ann Arbor,
MI, 48109 USA {\tt\small jeannin@umich.edu}.}

}
\markboth{}%
{Shell \MakeLowercase{\textit{et al.}}: Bare Demo of IEEEtran.cls for IEEE Journals}
%



\maketitle

\begin{abstract}
%

We introduce a new conservation-based approach
for traffic coordination modeling and control in a network of
interconnected roads (NOIR) with switching movement phase rotations at every NOIR junction. For modeling of traffic evolution,
we first assume that the  movement phase rotation  is cyclic at
every NOIR junction, but the duration of each movement phase can
be arbitrarily commanded by traffic signals. Then, we propose a novel concurrent
switching dynamics (CSD) with deterministic transitions among a finite number of
states, representing the NOIR movement phases. We define the
CSD control as a cyclic receding horizon optimization problem
with periodic quadratic cost 
and constraints. More specifically,
the cost is defined so that the traffic density is minimized and
the boundary inflow is uniformly distributed over the boundary
inlet roads, whereas the cost parameters are periodically changed
with time. The constraints are linear and imposed by a trapezoidal fundamental diagram at every NOIR road so that traffic
feasibility is assured and traffic over-saturation is avoided. The
success of the proposed traffic boundary control is demonstrated
by simulation of traffic congestion control in Downtown Phoenix (see Fig. \ref{realmap}).
\end{abstract}

\begin{IEEEkeywords}
Traffic congestion control, model predictive control, network dynamics.
\end{IEEEkeywords}



\section{Introduction}


Traffic congestion is a prevalent global phenomenon that arose accompanied by the urbanization process, which imposes enormous costs on both economy and ecology. According to the statistical investigation, due to the traffic congestion the average annual cost for a driver in the US was $97$ hours and $\$1,348$ in 2018\cite{reed2019global}. To this end, traffic management has been extensively studied by scholars in order to exploit the capacity of the existing road network such that the congestion can be alleviated without significant cost. Over the past decades, A number of approaches for traffic congestion management have been proposed, which can be roughly categorized into two groups: physics-based approaches and  light-based approaches.

Physics-based approaches refer to the methods depending on the traffic model which are related with traffic flow and queue theory. Incorporating with the mass conservation law, the Cell Transmission Model (CTM) is widely applied to spatially partition a network of interconnected roads (NOIR) into road elements in the process of physics-based traffic coordination modeling\cite{DAGANZO1995cell}. Based on the CTM theory, Ba and Savla\cite{ba2016ondistributed} propose an optimal control method to achieve the optimal traffic flow in consideration of the traffic density of the network. Also, to analyze and control the traffic dynamics, Geroliminis and Daganzo \cite{GEROLIMINIS2008existence} introduce the concept of macroscopic fundamental diagrams (MFD), which prove the relationship between the average traffic flow and the traffic density for urban-scale network. Since then, researchers have conducted a number of studies to integrate the MFD theory with the CTM approach, in order to enhance the accuracy of the large-scale traffic coordination model~\cite{munoz2003traffic,yang2017fundamental}. Furthermore, Haddad et al.\cite{Geroliminis2013optimal,HADDAD2017optimal} incorporate the perimeter control approach with the MFD theory to obtain the optimal flow of the traffic zone. Li et al.~\cite{li2017traffic} introduce a feedback control strategy based on the MFD model to maximize the traffic volume of the network. In addition, model predictive control (MPC) is another prevalent approach for physics-based traffic dynamics optimization\cite{Liu2021ConservationBased,Jamshidnejad2018sustainable,Rastgoftar2020integrative}. To overcome the nonlinearity of the prediction model, Lin et al.\cite{lin2011fast} rewrite the nonlinear MPC model into a mixed-integer linear programming (MILP) optimization problem and adopt the efficient MILP solver to guarantee the global optimum of the traffic flow.

Light-based approaches attempt to mitigate the traffic pressure by optimizing the timing plan of the traffic signal using model-free strategies, such as fuzzy logic, Genetic Algorithm (GA), Markov Decision Process (MDP), and neural network etc. Chiu and Chand~\cite{Chiu1993Adaptive} adopt the fuzzy-based strategy to control multiple intersections in a network of two-way streets with no turning motions. By collecting and processing the local traffic data, the optimization of the signal cycle length is achieved according to the degree of saturation at each intersection. Wei et al.\cite{Wu2001Traffic} introduce a fuzzy logic approach to optimize an isolated traffic intersection with four approaches and four phases. Adjustments to the signal timing are made in response to different user's demand for green time. Also, researchers employ the GA\cite{Sanchez2010Traffic} and MDP \cite{Yin2014Traffic,haijema2008MDP} approaches in the traffic signal optimization problem to reduce the traffic delay. Furthermore, with the prompt development of Artificial Intelligence (AI) technology and the breakthrough of computational capacity, Reinforcement Learning (RL) is becoming an increasingly popular data-driven approach applied for the optimization of the traffic signal plan\cite{abdulhai2003reinforcementlearning,Zeng2018Adaptive,Tantawy2012Reinforcement}. The successful application of the RL algorithm on the optimization of the traffic signal plan presents the ability of the approach to learn through dynamic interaction with the environment. Moreover, the Deep Learning (DL) integrated RL algorithm, which is widely known as  Deep Reinforcement Learning (DRL), is proposed to improve the applicability of the learning algorithm on the traffic signal optimization of the large-scale NOIR\cite{Wu2020Multi-Agent,Zhao2019Traffic,Rasheed2020Deep,Chu2020Deep}.

This paper considers the problem of modeling and control of  traffic evolution in an NOIR with movement phase rotations included at every junction. Compared with our previous work\cite{Rastgoftar2021APF}, the main goal is to model traffic evolution as a system with cyclic and deterministic transitions over a finite number of states representing NOIR movement phases. To this end, we assume that the movement phase rotation is periodic at every junction, but the durations of movement phases are not necessarily the same. To overcome this complexity,
we propose to replace ``movement phase duration'' by ``movement phase repetition'' at every NOIR junction. To this end, we use  a cycle graph with the nodes representing the movement phases, and the edges specifying transitions from the current movement phases to the next ones. Note that the cycle graph authorizes transitions to the next movement phase, which is either the same or different than the current movement phase. For traffic congestion, we assign optimal boundary inflow by solving a constrained receding horizon optimization problem with a quadratic cost function that periodically changes with time. The constraints of this boundary control problems imposing the feasibility of traffic evolution are linear and obtained by using a trapezoid fundamental diagram. Therefore, the optimal boundary inflow is obtained by solving a quadratic programming problem at every discrete time $k\in \mathbb{N}$.

This paper is organized as follows: The definitions and topology of traffic network are presented in Section \ref{Traffic Network}. The Problem Statement and Specification are presented in Sections \ref{sec:problem} and  \ref{Problem Specification}, respectively, and are followed by the development of the traffic network dynamics in Section \ref{Traffic Network Dynamics}. Traffic congestion control is presented as a periodic receding horizon optimization problem in Section \ref{Traffic Congestion Control}. Simulation results are presented in Section \ref{Simulation Results}, followed by Conclusion in Section \ref{Conclusion}.




\section{Traffic Network}\label{Traffic Network}
We consider a NOIR with unidirectional roads defined by set $\mathcal{V}$ and junctions defined by set $\mathcal{W}$. Interconnections between the roads are specified by graph $\mathcal{G}\left(\mathcal{V},\mathcal{E}\right)$ with node set $\mathcal{V}$ and edge set $\mathcal{E}\subset \mathcal{V}\times \mathcal{V}$.
Note that the set of nodes in the graph represents the set of unidirectional roads in the NOIR.
In this paper, edges defined by $\mathcal{E}$ satisfy the following two properties:
\begin{property}
If $\left(i,j\right)\in \mathcal{E}$, then, $\left(j,i\right)\notin \mathcal{E}$.
\end{property}
\begin{property}
If $\left(i,j\right)\in \mathcal{E}$, then, traffic is directed from $i\in \mathcal{V}$ towards $j\in \mathcal{V}$ which in turn implies that $i\in \mathcal{V}$ is the upstream road.
\end{property}
For every $i\in \mathcal{V}$, $\mathcal{I}_i=\left\{j\in \mathcal{V}:\left(j,i\right)\in \mathcal{E}\right\}$ and  $\mathcal{O}_i=\left\{j\in \mathcal{V}:\left(i,j\right)\in \mathcal{E}\right\}$ define the in-neighbors and out-neighbors of $i\in \mathcal{V}$, respectively. In particular, the following conditions hold:
\begin{enumerate}
    \item Traffic directed from $j\in \mathcal{I}_i$ towards $i\in \mathcal{V}$, if $\mathcal{I}_i\neq \emptyset$.
    \item Traffic directed from $i\in \mathcal{V}$ towards  $j\in \mathcal{O}_i$, if $\mathcal{O}_i\neq \emptyset$.
\end{enumerate}
By knowing edge set $\mathcal{E}$, we can express $\mathcal{V}$ as $\mathcal{V}=\mathcal{V}_{in}\bigcup \mathcal{V}_{out}\bigcup\mathcal{V}_I$ where inlet road set $\mathcal{V}_{in}$, outlet  road set $\mathcal{V}_{out}$, and interior road set $\mathcal{V}_I$ are formally defined as follows: 
\begin{subequations}
\begin{equation}
    \mathcal{V}_{in}=\left\{i\in \mathcal{V}:\mathcal{I}_i=\emptyset\right\},
\end{equation}
\begin{equation}
    \mathcal{V}_{out}=\left\{i\in \mathcal{V}:\mathcal{O}_i=\emptyset\right\},
\end{equation}
\begin{equation}
    \mathcal{V}_{I}=\left\{i\in \mathcal{V}:\mathcal{I}_i\neq\emptyset,~\mathcal{O}_i\neq\emptyset\right\}.
\end{equation}
\end{subequations}
Assuming the NOIR has $m$ junctions, $\mathcal{W}=\left\{1,\cdots,m\right\}$ defines the junctions' identification numbers. At junction $i\in \mathcal{W}$, the movement phase rotation 
is defined by cycle graph $\mathcal{C}_i\left(\mathcal{E}_i,\mathcal{R}_i\right)$, where $\mathcal{E}_i\subset \mathcal{E}$ and $\mathcal{R}_i\subset \mathcal{E}_i\times \mathcal{E}_i$ defines nodes and edges of cycle graph $\mathcal{C}_i$. Set $\mathcal{E}_i$ can be expressed as 
\begin{equation}
    \mathcal{E}_i=\bigcup_{j=1}^{r_i}\mathcal{E}_{i,j},\qquad \forall i\in \mathcal{W}
\end{equation}
For better clarification of the above definitions, we consider an example NOIR of Phoenix City shown in Fig.~\ref{realmap} with $60$ 
 unidirectional roads
 identified by set $\mathcal{V}=\{1,\cdots,60\}$, where $\mathcal{V}=\mathcal{V}_{in}\bigcup\mathcal{V}_{out}\bigcup\mathcal{V}_I$, $\mathcal{V}_{in}=\left\{1,\cdots,11\right\}$, $\mathcal{V}_{out}=\left\{12,\cdots,22\right\}$, and $\mathcal{V}_I=\left\{23,\cdots,60\right\}$. The NOIR shown in Fig.~\ref{realmap} consists of $14$ junctions defined by $\mathcal{W}=\{1,\cdots,14\}$. Set $\mathcal{E}_{12}=\mathcal{E}_{12,1}\bigcup \mathcal{E}_{12,2}\bigcup \mathcal{E}_{12,3}\bigcup \mathcal{E}_{12,4}$ defines the four movement phases at junction $12\in \mathcal{W}$, where $\mathcal{E}_{12,1}=\{(2,27), (2,35), (2,55)\}\subset \mathcal{E}$, $\mathcal{E}_{12,2}=\{(36,35), (36,27), (36,19), (36,55)\}\subset \mathcal{E}$, $\mathcal{E}_{12,3}=\{(54,19), (54,27), (54,55), (54,35)\}\subset \mathcal{E}$, and $\mathcal{E}_{12,4}=\{(46,19), (46,55), (46,35), (46,27)\}\subset \mathcal{E}$. The four possible movement phases at junction $12$ are shown in Fig.~\ref{fig:example_junction_phases}. 

\begin{assumption}\label{repeatedmovephase}
The next movement phase at junction $i\in \mathcal{V}$ can be either the same as or different with the current movement phase. Mathematically, The next movement phase $\mathcal{E}_{i,h}$ is not necessarily different with the current movement phase $\mathcal{E}_{i,l}$, if $\left(\mathcal{E}_{i,l},\mathcal{E}_{i,h}\right)\in \mathcal{R}_i$. 
\end{assumption}
 Assumption \ref{repeatedmovephase} implies that the  phase rotation cycle is still advanced, if next movement phase $\mathcal{E}_{i,h}$ is either the same as or different with current movement phase $\mathcal{E}_{i,l}$,
\begin{assumption}\label{concurrent}
Movement phase rotation occurs concurrently across the NOIR junctions.  
\end{assumption}
Assumption \ref{concurrent} is not a restricting assumption since repetition of  movement phases is authorized at a junction per Assumption  \ref{repeatedmovephase}. Indeed the duration of particular movement phase $\mathcal{E}_{i,j}$ ($j=1,\cdots,r_i$) can be chosen arbitrarily large through defining edge set $\mathcal{R}_i$ for $i\in \mathcal{W}$. As shown in Fig. \ref{CNOIR}, repetition of a particular movement phase denoted by $\mathcal{E}_{i,j}$, is authorized by defining $\mathcal{E}_{i,k}=\mathcal{E}_{i,j}$ and  $\mathcal{E}_{i,l}=\mathcal{E}_{i,j}$ where $\left(\mathcal{E}_{i,j},\mathcal{E}_{i,k}\right)\in \mathcal{R}_i$ and $\left(\mathcal{E}_{i,k},\mathcal{E}_{i,l}\right)\in \mathcal{R}_i$.
\begin{figure}[ht]
    \centering
    \includegraphics[width=\linewidth]{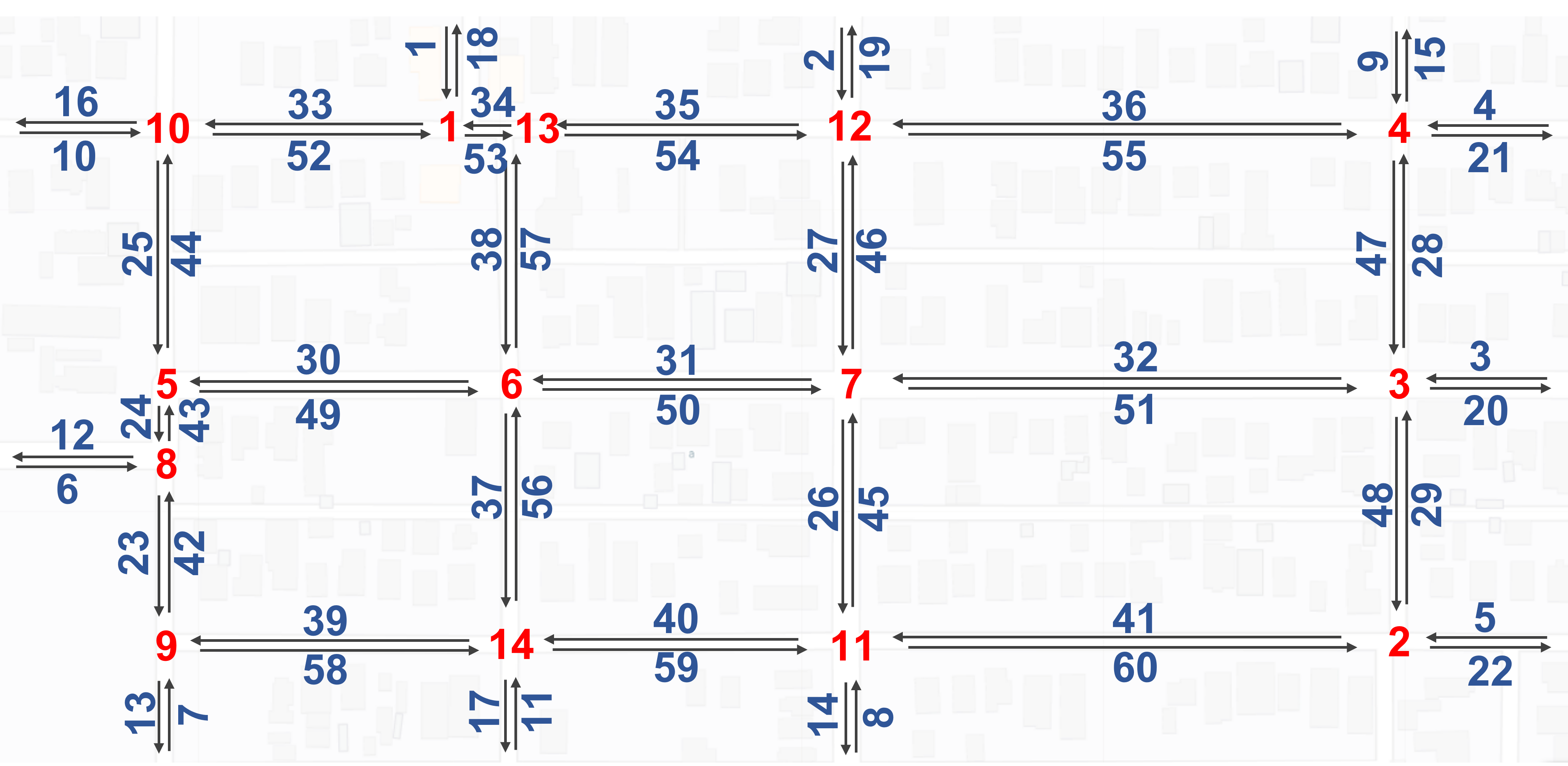}
    \vspace{-.3cm}
    \caption{Example NOIR: Street map of Phoenix.
    Numbers in blue denote the unidirectional roads $\mathcal{V}=\{1,\cdots,60\}$, and red color numbers represent junctions $\mathcal{W}=\{1,\cdots,14\}$.}
    \label{realmap}
\end{figure}


\begin{definition}
The NOIR movement phase rotation is cyclic and defined by graph $\mathcal{C}_{\mathrm{NOIR}}\left(\mathcal{L},\mathcal{M}\right)$ with node set
\[
\mathcal{L}=\mathcal{E}_1\times \cdots \times\mathcal{E}_m
\]
and edge set $\mathcal{M}\subset \mathcal{L}\times \mathcal{L}$, where $\times$ is the Cartesian product symbol. 
\end{definition}

\begin{theorem}\label{thm1}
Let $r_i$ be the number of movement phases at junction $i\in \mathcal{W}=\left\{1,\cdots,m\right\}$, and movement phase rotation be cyclic and satisfy condition  \ref{rotation} at every junction $i\in\mathcal{W}=\left\{1,\cdots,m\right\}$. Then, the NOIR cycle is completed in $n_c$ time steps where 
\begin{equation}
    n_c=\mathrm{lcm}\left(r_1,\cdots,r_m\right)
\end{equation}
is the lowest common multiple of $r_1$, $\cdots$, $r_{m}$.
\end{theorem}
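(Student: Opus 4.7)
The plan is to use the product structure of the NOIR state space $\mathcal{L}=\mathcal{E}_1\times\cdots\times\mathcal{E}_m$ together with the concurrent switching guaranteed by Assumption \ref{concurrent}, and reduce the claim to the elementary fact that the orbit of a product of cyclic rotations has length equal to the least common multiple of the individual cycle lengths.

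First, I would fix notation for a single junction $i\in\mathcal{W}$. Since $\mathcal{C}_i(\mathcal{E}_i,\mathcal{R}_i)$ is a cycle graph on the $r_i$ nodes $\mathcal{E}_{i,1},\ldots,\mathcal{E}_{i,r_i}$, its edges define a deterministic successor function $\sigma_i:\mathcal{E}_i\to\mathcal{E}_i$ with $\sigma_i(\mathcal{E}_{i,j})=\mathcal{E}_{i,j+1\,\mathrm{mod}\,r_i}$. I would then point out that, for any starting node, $\sigma_i^{n}$ is the identity if and only if $r_i\mid n$; in particular the smallest positive $n$ for which junction $i$ has returned to its initial movement phase is exactly $r_i$. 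Here it is essential to emphasize, invoking Assumption \ref{repeatedmovephase}, that even if two distinct nodes $\mathcal{E}_{i,j}$ and $\mathcal{E_{i,k}}$ of the cycle graph correspond to the same set of authorized movements, they are still distinct states of the cycle graph, so the relevant notion of ``returning'' is return in $\mathcal{C}_i$, not equality of the underlying edge sets.

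Next I would lift this to the whole NOIR. By Assumption \ref{concurrent}, every junction advances exactly one step of its own cycle graph at each discrete time $k\in\mathbb{N}$. Hence the NOIR successor on $\mathcal{L}$ is the product map $\Sigma=\sigma_1\times\cdots\times\sigma_m$, and
\begin{equation}
    \Sigma^{n}\left(\mathcal{E}_{1,j_1},\ldots,\mathcal{E}_{m,j_m}\right)=\left(\sigma_1^{n}(\mathcal{E}_{1,j_1}),\ldots,\sigma_m^{n}(\mathcal{E}_{m,j_m})\right).
\end{equation}
Thus $\Sigma^{n}$ is the identity on $\mathcal{L}$ if and only if $\sigma_i^{n}$ is the identity on $\mathcal{E}_i$ for every $i\in\mathcal{W}$, which by the previous step is equivalent to $r_i\mid n$ for all $i$. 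The smallest positive integer $n$ satisfying $r_1\mid n,\ldots,r_m\mid n$ is by definition $\mathrm{lcm}(r_1,\ldots,r_m)$, giving $n_c=\mathrm{lcm}(r_1,\ldots,r_m)$.

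I do not anticipate a serious technical obstacle; the argument is essentially the standard orbit-length computation for a product of finite cyclic actions. The only subtlety worth addressing carefully is the one mentioned above, namely distinguishing ``the junction is in the same node of $\mathcal{C}_i$'' from ``the junction authorizes the same set of movements,'' so that Assumption \ref{repeatedmovephase} does not prematurely shorten the cycle. Once this is made explicit, the conclusion follows directly.
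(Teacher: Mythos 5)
Your proof is correct and follows essentially the same route as the paper's: each junction's rotation is an independent, deterministic cycle of length $r_i$, all junctions advance concurrently by Assumption \ref{concurrent}, and the joint return time is therefore $\mathrm{lcm}(r_1,\ldots,r_m)$. Your write-up is in fact more explicit than the paper's (which merely asserts independence and concurrency and concludes), and your point about distinguishing a node of $\mathcal{C}_i$ from the edge set it represents --- so that the repetitions permitted by Assumption \ref{repeatedmovephase} do not spuriously shorten the cycle --- is a clarification the paper's own proof leaves implicit.
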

\begin{proof}
Completion of movement phase rotation at every junction is deterministic and independent of other junctions at every junction $i\in \mathcal{W}$. By imposing Assumption \ref{concurrent}, The edges of graph $\mathcal{C}_{\mathrm{NOIR}}\left(\mathcal{L},\mathcal{M}\right)$, defined by $\mathcal{L}$, are restricted to satisfy the following constraints:
\begin{equation}\label{rotation}
    \bigwedge_{i\in \mathcal{W}}\bigwedge_{j=1}^{r_i}\bigwedge_{h=1}^{r_i}
    \left(\left(\mathcal{E}_{i,j},\mathcal{E}_{i,h}\right)\in \mathcal{R}_i\right).
\end{equation}
Because movement phase rotation, completed in ${r}_i$ time steps, is independent at every junction $i\in \mathcal{W}$, the cycle of graph $\mathcal{C}_{\mathrm{NOIR}}$ is completed in $n_c$ time steps where $n_c$ is the lowest common multiple of $r_1$, $\cdots$, $r_{m}$ and obtained by \eqref{rotation}.
\end{proof}
Per Theorem \ref{thm1}, graph $\mathcal{C}_{\mathrm{NOIR}}\left(\mathcal{L},\mathcal{M}\right)$ consists of $n_c$ nodes defined by set 
\begin{equation}
\mathcal{L}=\left\{\lambda_j=\left(\mathcal{E}_{1,j},\cdots,\mathcal{E}_{i,j},\mathcal{E}_{m,j}\right):i\in \mathcal{W},~j=0,1,\cdots,n_c-1\right\}
\end{equation}
\begin{definition}
The identification number of the NOIR movement phases are defined by set 
\begin{equation}
    \sigma=\left\{0,\cdots,n_c-1\right\}.
\end{equation}
\end{definition}
Fig. \ref{fig:phoenix_noir} shows an example graph $\mathcal{C}_{\mathrm{NOIR}}$ specifying the NOIR movement phase rotations for a traffic network with $m$ junctions.
\begin{figure}[ht]
    \centering
    \includegraphics[width=\linewidth]{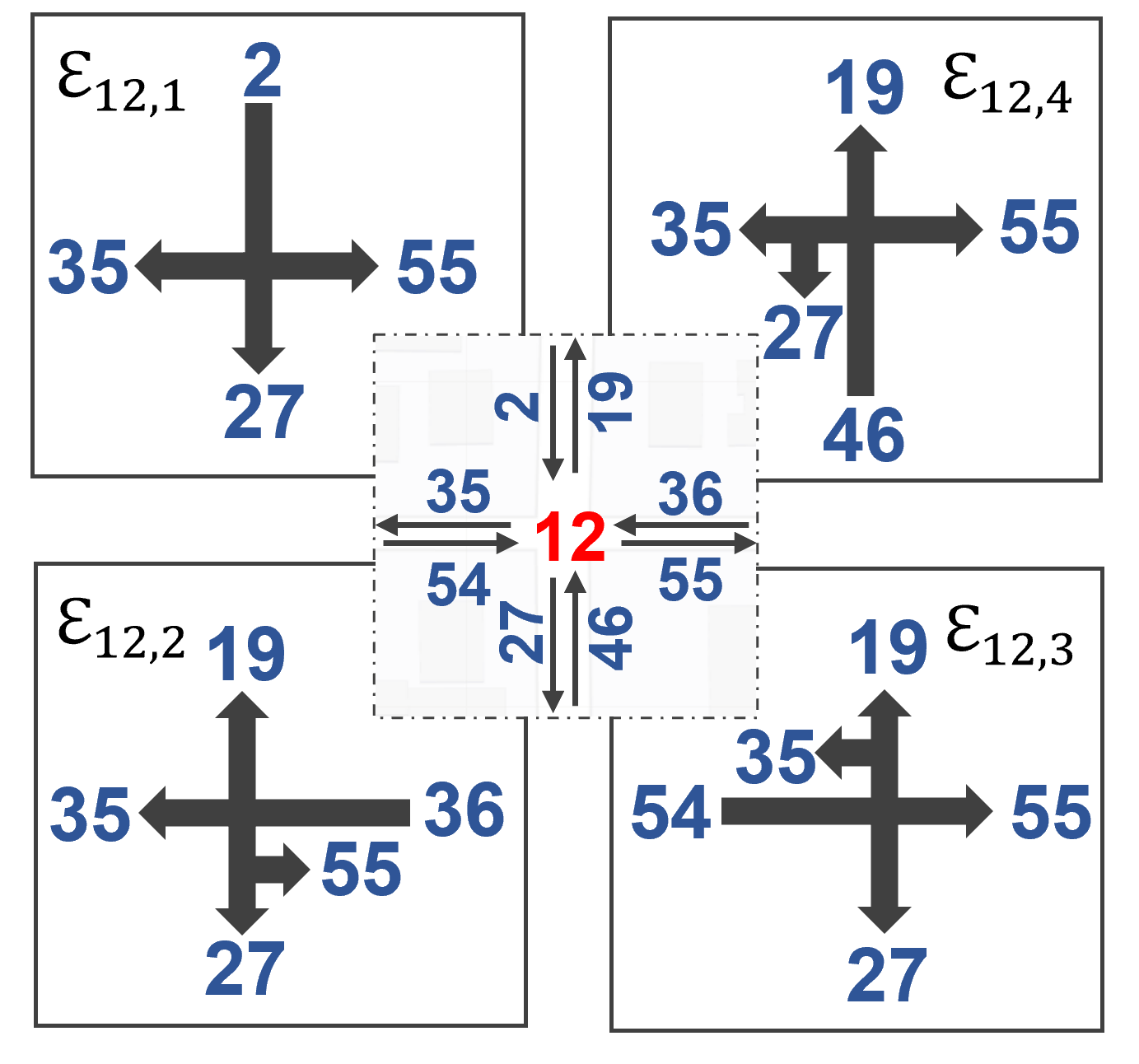}
    \vspace{-.3cm}
    \caption{Possible movement phases at junction $12\in\mathcal{W}$}
    \label{fig:example_junction_phases}
\end{figure}

\begin{figure}[ht]
    \centering
    \includegraphics[width=\linewidth]{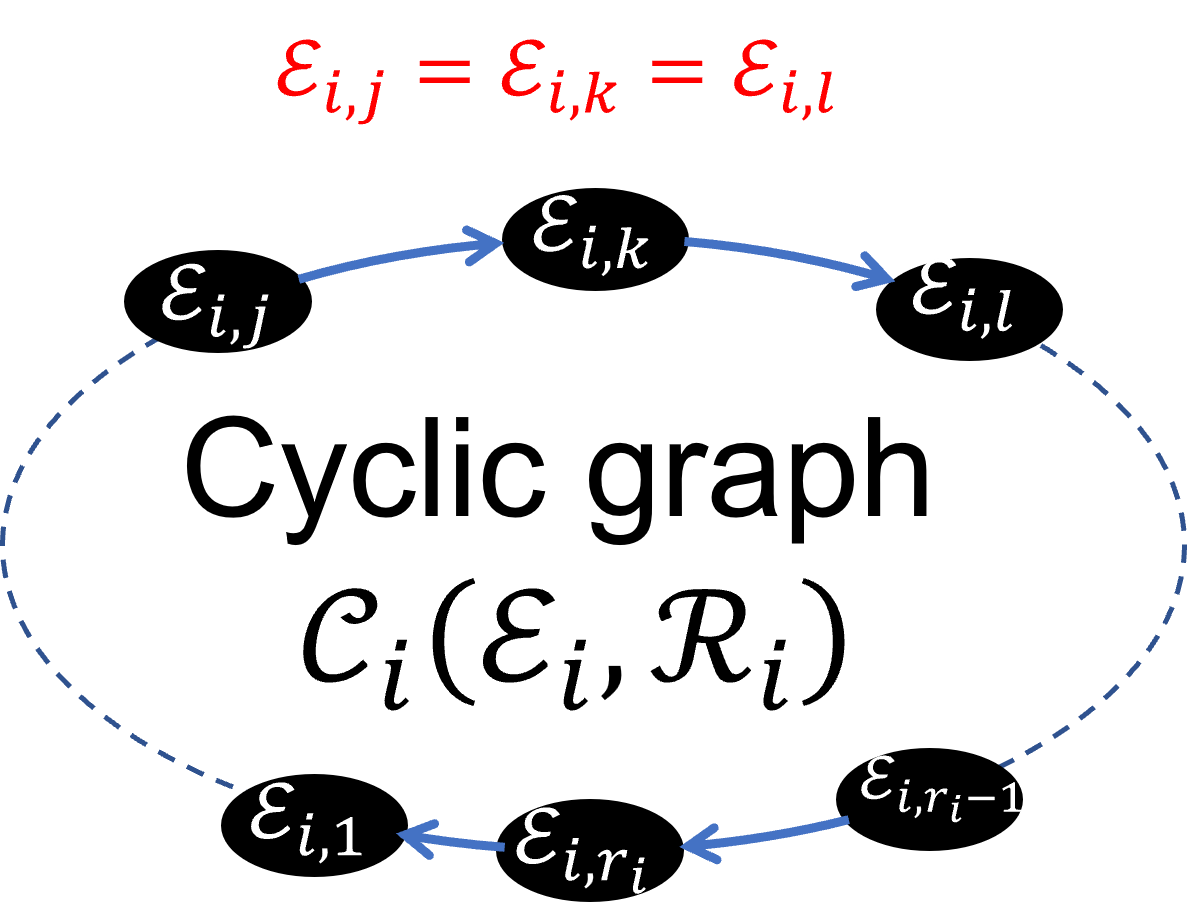}
    \vspace{-.3cm}
    \caption{Example NOIR: Street map of Phoenix}
    \label{CNOIR}
\end{figure}

\begin{figure}[htb]
\centering
\includegraphics[width=3.3  in]{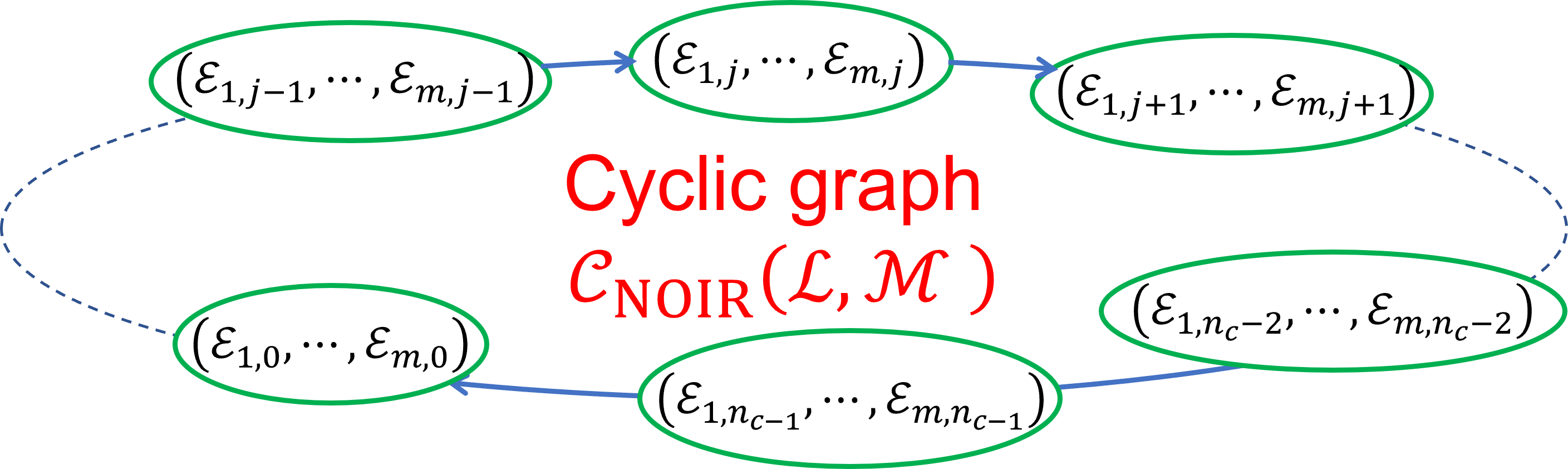}
\vspace{-.3cm}
\caption{Schematic of the cyclic graph $\mathcal{C}_i$ with movement phase repetition. }
\label{fig:phoenix_noir}
\end{figure}

\section{Problem Statement}
\label{sec:problem}
Before proceeding to state the problem studied in this paper, we define variables $\zeta[k]\in \sigma$ and $\gamma[k]\in \sigma$ by
\begin{align*}
\zeta[k] & =k\bmod n_c\\
\gamma[k] &= (k+1)\bmod n_c = \zeta[k]+1\bmod n_c
\end{align*}
%
%
We apply the cell transmission model to model traffic in a NOIR by 
\begin{equation}
\label{eq:update}
    \rho_i[k+1]=\rho_i[k]+y_i[k]-z_i[k]+u_i[k],
\end{equation}
where $\rho_i$ is the traffic density; $u_i$ is the boundary inflow; and $z_i$ and $y_i$ are the  network outflow and network inflow of road $i\in \mathcal{V}$, respectively; and they are defined as follows:
\begin{subequations}
\begin{equation}
    z_i[k]=p_{i,\lambda_{\zeta[k]}}\rho_i[k],\qquad  \lambda_{\zeta[k]} \in \mathcal{L},~\zeta[k]\in\sigma 
\end{equation}
\begin{equation}
\begin{split}
    y_i[k]=&\sum_{j\in \mathcal{O}_i}q_{j,i,\lambda_{\zeta[k]}}z_j[k]\\
    =&\sum_{j\in \mathcal{O}_i}q_{j,i,\lambda_{\zeta[k]}}p_{j,\lambda_{\zeta[k]}}\rho_j[k]
\end{split}
,\qquad  \lambda_{\zeta[k]} \in \mathcal{L},~\zeta[k]\in\sigma 
\end{equation}
\end{subequations}
at every discrete time $k$. Note that 
\begin{equation}\label{zetagamma}
\left(\lambda_{\zeta[k]},\lambda_{\gamma[k]}\right)\in \mathcal{M}, \qquad \forall k,
\end{equation}
where $p_{i,\lambda_\zeta[k]}\in \left(0,1\right]$ is the outflow probability of road $i\in \mathcal{V}$ at discrete time $k$ when $\lambda_{\zeta[k]}\in \mathcal{M}$ is the active NOIR movement phase.
Also, $q_{j,i,\lambda_\zeta[k]}\in \left(0,1\right]$ assigns the fraction of outflow of road $i\in \mathcal{V}$ directed towards $j\in \mathcal{O}_i$ under NOIR movement phase $\lambda_{\zeta[k]}\in \mathcal{M}$ at discrete time $k$.

To assure the traffic feasibility, traffic density of road $i\in \mathcal{V}$ must satisfy the following inequality constraints:
\begin{subequations}\label{CCC}
\begin{equation}\label{C1}
    \bigwedge_{i\in \mathcal{V}_{in}}\left(u_i[k]\geq 0\right),\qquad k\in \mathbb{N},
\end{equation}
\begin{equation}\label{C2}
    \sum_{i\in \mathcal{V}_{in}}u_i[k]=u_0,\qquad k\in \mathbb{N},
\end{equation}
\begin{equation}\label{C3}
    \bigwedge_{i\in \mathcal{V}}\left(0\leq \rho_i[k]\leq \bar{\rho}_{\max} \right),\qquad k\in \mathbb{N}.
\end{equation}
\end{subequations}
Condition \eqref{C1} ensures that the boundary inflow is non-negative at every discrete time $k$. Condition \eqref{C2}, requiring the net boundary inflow is equal to constant value $u_0$, is valid when the demand for using the NOIR is high. 
Condition \eqref{C3} assures that the traffic density is always positive and does not exceed $\rho_{\max}$.
We use the Fundamental Diagram (FD)~\cite{WU2011empirical} with the schematic shown in Fig.~\ref{Fundamentaldiagram}  to assure feasibility of the network outflow at every road $i\in\mathcal{V}$. As shown in Fig.~\ref{Fundamentaldiagram}, the fundamental diagram is a trapezoid that is determined by knowing $\bar{\rho}_{\min}$, $\bar{\rho}_{\rm{mid}}$, $\bar{\rho}_{\max}$, and $\bar{z}_{\max}$. In particular, the FD is applied to assure that the outflow of road $i\in \mathcal{V}$, denoted by $z_i$ is feasible by satisfying the following inequality constraints:
\begin{subequations}\label{fnd_all}
\begin{equation}\label{fnd1}
    \bigwedge_{i\in \mathcal{V}}\left(z_i[k]\geq 0\right),\qquad k\in \mathbb{N},
\end{equation}
\begin{equation}\label{fnd2}
    \bigwedge_{i\in \mathcal{V}}\left(z_i[k]\leq {\bar{z}_{\max}\rho_i[k]\over \bar{\rho}_{\min}}\right),\qquad k\in \mathbb{N},
\end{equation}
\begin{equation}\label{fnd3}
    \bigwedge_{i\in \mathcal{V}}\left(z_i[k]\leq \bar{z}_{\max}\right),\qquad k\in \mathbb{N},
\end{equation}
\begin{equation}\label{fnd4}
    \bigwedge_{i\in \mathcal{V}}\left(z_i[k]\leq \frac{\bar{z}_{\max}\left(\rho_i[k]-\bar{\rho}_{\max}\right)} {(\bar{\rho}_{\rm{mid}}-\bar{\rho}_{\max})}\right),\qquad k\in \mathbb{N}.
\end{equation}
\end{subequations}


\begin{figure}[htb]
\centering
\includegraphics[width=3.3  in]{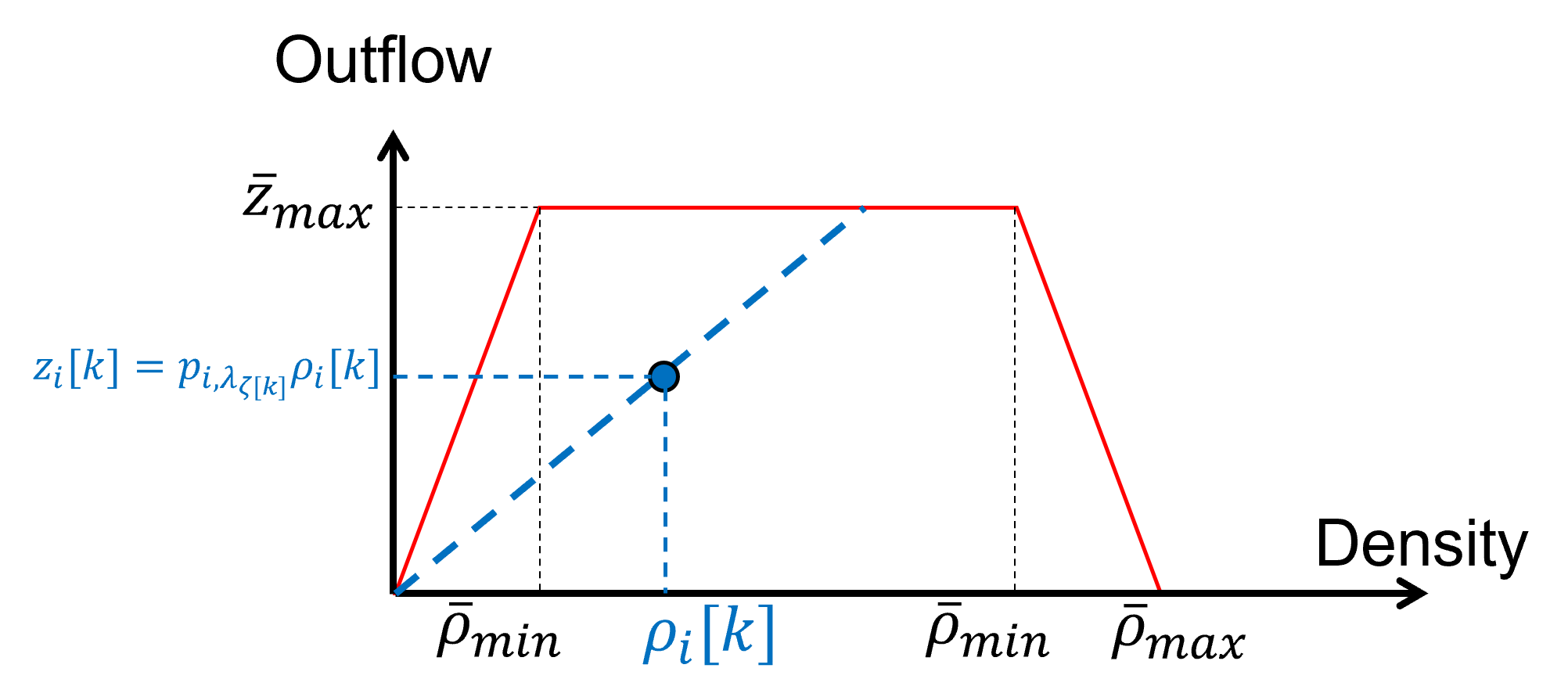}
\caption{Graphic representation of constraint equation \eqref{fnd_all} imposed by the trapezoid fundamental diagram\cite{WU2011empirical}.
The left limit of the diagram corresponds to constraint (\ref{fnd2}); the top limit to constraint (\ref{fnd3}); and the right limit to constraint (\ref{fnd4}).
}
\label{Fundamentaldiagram}
\end{figure}



The objective of this paper is to determine boundary inflow  $u_i[k]$ at every road $i\in \mathcal{V}$ and every discrete time $k$ such that the traffic coordination cost defined by
\begin{equation}\label{cost function}
    \mathrm{J}=\sum_{j=1}^{n_c}\left(\sum_{i\in\mathcal{V}_{in}}u_i^2[k+j]+\beta\sum_{i\in\mathcal{V}}\rho_i^2[k+j]\right)
\end{equation}
is minimized, where scaling parameter $\beta>0$ is constant.




\section{Problem Specification}\label{Problem Specification}




We can express the requirements from Section~\ref{sec:problem} in Linear Temporal Logic (LTL).
Every LTL formula consists of a set of atomic propositions, logical operators, and temporal operators. Logical operators include $\lnot$ (``negation''), $\vee$ (``disjunction''), $\wedge$ (``conjunction''), and $\Rightarrow$ (``implication''). LTL formulae also use temporal operators  $\square$ (``always''), $\bigcirc$ (``next''), $\lozenge$ (``eventually''), and $\mathcal{U}$ (``until'').

The traffic evolution governed by \eqref{eq:update} 
must satisfy the feasibility requirements (equations (\ref{C1})-(\ref{C3})), leading to the requirements:
\begin{subequations}
\begin{equation}
    \bigwedge_{i\in \mathcal{V}_{in}}\square\left(u_i\geq 0\right),
\end{equation}
\begin{equation}
    \square\left(
    \sum_{i\in \mathcal{V}_{in}}u_i=u_0\right),
\end{equation}
\begin{equation}
    \bigwedge_{i\in \mathcal{V}}\square\left(\rho_i\geq 0\right),
\end{equation}
\begin{equation}
    \bigwedge_{i\in \mathcal{V}}\square\left(\rho_i\leq \bar{\rho}_{\max} \right).
\end{equation}
\end{subequations}
Additionally the movement phase rotation can be expressed as:
\begin{equation}
        \square\left(\left(\lambda_\zeta,\lambda_\gamma\right)\in \mathcal{M}\right).
\end{equation}

We can also concisely express the FD constraints (Eq. \eqref{zetagamma} and Eqs. (\ref{fnd1})-(\ref{fnd4})), leading to the LTL requirements:
\begin{subequations}
\begin{equation}
\bigwedge_{i\in \mathcal{V}}\square\left(z_i\geq 0\right),
\end{equation}
\begin{equation}
    \bigwedge_{i\in \mathcal{V}}\square\left(z_i\leq {\bar{z}_{\max}\rho_i\over \bar{\rho}_{\min}}\right),
\end{equation}
\begin{equation}
    \bigwedge_{i\in \mathcal{V}}\square\left(z_i\leq \bar{z}_{\max}\right),
\end{equation}
\begin{equation}
    \bigwedge_{i\in \mathcal{V}}\square\left(z_i\leq \frac{\bar{z}_{\max}\left(\rho_i-\bar{\rho}_{\max}\right)} {(\bar{\rho}_{\rm{mid}}-\bar{\rho}_{\max})}\right).
\end{equation}
\end{subequations}
The objective of traffic congestion control is to satisfy the following  liveness conditions:
\begin{equation}\label{Liveness}
   \Diamond\left( \left|\bigwedge_{i\in \mathcal{V}_{out}}z_i-u_0\right|<\epsilon\right),
\end{equation}
where $\epsilon$ is constant and obtained in Section \ref{Traffic Network Dynamics}. Liveness condition \eqref{Liveness} specifies the reachability of the traffic state to the steady-state condition where the network inflow and outflow are the same. Theorem \ref{LivenesTheorem} presented in Section \ref{Traffic Network Dynamics} proves that the liveness condition \eqref{Liveness} is satisfied if the proposed first order dynamics is used to model the traffic coordination.  

%

\section{Traffic Network Dynamics}
\label{Traffic Network Dynamics}
To obtain the traffic network dynamics, we define tendency probability matrix $\mathbf{Q}\left(\zeta[k]\right)=\left[q_{i,j,\zeta[k]}\right]\in \mathbb{R}^{N\times N}$, outflow probability matrix
\begin{equation}
    \mathbf{P}\left(\zeta[k]\right)=\mathbf{diag}\left(p_{1,\lambda_{\zeta[k]}},\cdots,p_{N,\lambda_{\zeta[k]}}\right)\in \mathbb{R}^{N\times N},\qquad \lambda\in \mathcal{L},
\end{equation}
and 
\begin{equation}
    \mathbf{A}\left(\zeta[k]\right)=\mathbf{I}+\left(\mathbf{Q}\left(\zeta[k]\right)-\mathbf{I}\right)\mathbf{P}\left(\zeta[k]\right),\qquad \zeta \in \sigma,~k\in \mathbb{N}.
\end{equation}

We also define matrix $\mathbf{B}=\left[b_{ij}\right]\in \mathbb{R}^{N\times N}$ with the $(i,j)$ entry that is defined as follows:
\begin{equation}\label{matrixB}
    b_{ij}=\begin{cases}
    1&i\in \mathcal{I}_{j}\\
    0&\mathrm{otherwise}
    \end{cases}
    .
\end{equation}
By defining the state vector $\mathbf{x}[k]=\begin{bmatrix}\rho_1[k]&\cdots&\rho_N[k]\end{bmatrix}^\mathsf{T}$ and input vector $\mathbf{u}[k]=\begin{bmatrix}u_1[k]&\cdots&u_{N_{in}}[k]\end{bmatrix}^\mathsf{T}$, and imposing the CTM given in \eqref{eq:update}, the traffic network dynamics is obtained as follows:
\begin{equation}\label{singlestepdyn}
    \mathbf{x}[k+1]=\mathbf{A}\left(\zeta[k]\right)\mathbf{x}[k]+\mathbf{B}\mathbf{u}[k],\qquad \zeta=\sigma.
\end{equation}
Given the above definitions, matrices $\mathbf{P}\left(\zeta[k]\right)$ and $\mathbf{Q}\left(\zeta[k]\right)$ hold the following properties:
\begin{property}\label{property3}
Entries of the diagonal matrix $\mathbf{P}\left(\zeta[k]\right)$ are positive and not greater than $1$ because $p_{i,\lambda_\zeta[k]}\in \left(0,1\right]$ for every $i\in\mathcal{V}$. 
\end{property}

\begin{property}\label{property4}
Matrix $\mathbf{Q}\left(\zeta[k]\right)$ is a non-negative matrix because  $q_{j,i,\lambda_\zeta[k]}\in \left(0,1\right]$ for every $i\in\mathcal{V}$.
\end{property}

\begin{property}\label{property5}
Diagonal entries of matrix $\mathbf{Q}\left(\zeta[k]\right)$ are $0$ because  $i\notin \mathcal{O}_i$ for every $i\in\mathcal{V}$.
\end{property}

\begin{property}\label{property6}
Because road elements in the NOIR are unidirectional interconnected, graph $\mathcal{G}\left(\mathcal{V},\mathcal{E}\right)$ holds the property that:
\[
\left(i,j\right)\in \mathcal{E}\implies \left(j,i\right)\notin\mathcal{E},\quad \forall i,j\in \mathcal{V},
\]
which indicates that off-diagonal entries of matrix $\mathbf{Q}\left(\zeta[k]\right)$ follow the relation that:
\[
Q_{i,j,\lambda_{\zeta[k]}}\neq0\implies Q_{j,i,\lambda_{\zeta[k]}}=0,\quad \forall i\neq j\cap i,j\in \mathcal{V}.
\]
\end{property}

\begin{property}\label{property7}
At each discrete time $k\in \mathbb{N}$,
\begin{subequations}
\begin{equation}
     \sum_{j=1}^{N}Q_{j,i,\lambda_\zeta[k]}=0, \quad \forall i\in\mathcal{V}_{out}
\end{equation}
\begin{equation}
    \sum_{j\in\mathcal{O}_i} q_{j,i,\lambda_\zeta[k]} = \sum_{j=1}^{N}Q_{j,i,\lambda_\zeta[k]}=1, \quad \forall i\in\mathcal{V}\setminus\mathcal{V}_{out}
\end{equation}
\end{subequations}
\end{property}
\begin{theorem}
If Properties \ref{property3}-\ref{property7} are all satisfied at each discrete time $k$, the traffic dynamics \eqref{singlestepdyn} is BIBO stable.
\end{theorem}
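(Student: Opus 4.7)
The strategy is to show that $\mathbf{A}(\zeta[k])$ is column substochastic for every $\zeta[k]\in\sigma$ and then to use periodicity of the movement phase rotation (Theorem \ref{thm1}) to argue that the monodromy matrix obtained over one full cycle of length $n_c$ is a strict contraction in the induced $1$-norm. Standard bounded-input analysis for a contractive discrete-time switched system will then deliver BIBO stability via a geometric-series bound.

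First I would write the $(i,j)$ entry of $\mathbf{A}(\zeta[k])=\mathbf{I}+(\mathbf{Q}(\zeta[k])-\mathbf{I})\mathbf{P}(\zeta[k])$ explicitly as
\[
A_{ij}(\zeta[k]) = \delta_{ij}\left(1-p_{j,\lambda_{\zeta[k]}}\right) + q_{i,j,\lambda_{\zeta[k]}}\,p_{j,\lambda_{\zeta[k]}}.
\]
Properties \ref{property3}--\ref{property5} then give $A_{ij}(\zeta[k])\ge 0$ entrywise (the diagonal contribution from $\mathbf{Q}$ vanishes by Property \ref{property5}, and $1-p_j\ge 0$ by Property \ref{property3}). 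Summing the $j$-th column and invoking Property \ref{property7} yields
\[
\sum_i A_{ij}(\zeta[k]) =
\begin{cases}
1, & j\in \mathcal{V}\setminus \mathcal{V}_{out},\\
1-p_{j,\lambda_{\zeta[k]}} < 1, & j\in \mathcal{V}_{out},
\end{cases}
\]
so $\|\mathbf{A}(\zeta[k])\|_1\le 1$ with strict mass loss from every outlet column. Geometrically this says that $\mathbf{A}(\zeta[k])$ transports nonnegative mass without creating any, and strictly sheds mass through roads in $\mathcal{V}_{out}$.

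Next I would form the monodromy matrix $\mathbf{\Phi} = \mathbf{A}(\zeta[k+n_c-1])\cdots\mathbf{A}(\zeta[k])$, which by Theorem \ref{thm1} is well defined and independent of $k$. Each factor is nonnegative with column sums at most $1$, so $\mathbf{\Phi}$ is nonnegative with $\|\mathbf{\Phi}\|_1\le 1$. I would then argue that, because every road in $\mathcal{V}\setminus\mathcal{V}_{out}$ drains into some outlet via a sequence of directed edges activated across the $n_c$ movement phases of one cycle, mass seeded in any column $j$ of $\mathbf{\Phi}$ has a strictly positive fraction routed out of the network within a single period. Equivalently, every column sum of $\mathbf{\Phi}$ is strictly below $1$, giving $\|\mathbf{\Phi}\|_1\le \alpha<1$ for some $\alpha\in(0,1)$. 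Iterating \eqref{singlestepdyn} over successive periods, bounding $\|\mathbf{B}\mathbf{u}[k]\|_1$ by a uniform constant, and summing the resulting geometric series in $\alpha$ then produces the desired BIBO bound on $\|\mathbf{x}[k]\|_1$.

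The hard part will be justifying the contraction step $\|\mathbf{\Phi}\|_1<1$, since Properties \ref{property3}--\ref{property7} alone do not syntactically force every road to be path-connected to $\mathcal{V}_{out}$ within a single cycle. I expect to close this gap by combining the implicit well-posedness of the NOIR topology from Section \ref{Traffic Network} (every road eventually drains into some outlet) with a Perron--Frobenius argument on the nonnegative matrix $\mathbf{\Phi}$: a nonnegative, irreducible matrix whose column sums are bounded by $1$ and not all equal to $1$ has spectral radius strictly less than $1$, which is exactly the contraction needed to conclude BIBO stability.
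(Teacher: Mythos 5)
Your route is genuinely different from the paper's. The paper works eigenvalue-by-eigenvalue on each individual mode: it applies the Gershgorin circle theorem to $\mathbf{Q}\left(\zeta[k]\right)-\mathbf{I}$, argues that an eigenvalue of $\mathbf{A}\left(\zeta[k]\right)$ equal to $1$ would force $\mathbf{Q}\left(\zeta[k]\right)-\mathbf{I}$ to be singular, rules this out by asserting that Properties \ref{property5} and \ref{property6} make the rows of $\mathbf{Q}\left(\zeta[k]\right)-\mathbf{I}$ independent, and then concludes BIBO stability from $\rho\left(\mathbf{A}\left(\zeta[k]\right)\right)<1$ at each $k$. Your column-substochasticity computation ($A_{ij}\geq 0$, column sums equal to $1$ off $\mathcal{V}_{out}$ and equal to $1-p_{j,\lambda_{\zeta[k]}}<1$ on $\mathcal{V}_{out}$) is correct, and your overall architecture is actually \emph{more} rigorous than the paper's on one crucial point: for a switching system, $\rho\left(\mathbf{A}\left(\zeta[k]\right)\right)<1$ for every mode does not by itself bound products of distinct modes, whereas your common induced bound $\left\|\mathbf{A}\left(\zeta[k]\right)\right\|_1\leq 1$ together with a strict contraction of the periodic product does yield the geometric-series BIBO estimate directly. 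What the paper's approach buys, if its nonsingularity claim held, is a shorter argument that never needs to reason about the cycle structure of Theorem \ref{thm1}.

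The gap you flag is genuine, and it is the same gap the paper's own proof silently has. Properties \ref{property3}--\ref{property7} do not exclude a closed loop of interior roads whose outflow never reaches $\mathcal{V}_{out}$: take $\mathbf{Q}$ to be the cyclic permutation on three interior roads, which satisfies Properties \ref{property5} and \ref{property6} yet makes $\mathbf{Q}-\mathbf{I}$ singular, so the paper's ``rows are independent'' step fails, your monodromy matrix retains a column summing to $1$ with no leakage, and with constant inflow $u_0>0$ the density in that loop grows without bound -- so the theorem needs the extra structural hypothesis that every road has a directed path to $\mathcal{V}_{out}$ realized by the activated movement phases. Two further repairs to your closing argument: (i) Perron--Frobenius for irreducible matrices does not apply as stated, because $\mathbf{\Phi}$ is reducible (outlet roads do not feed back and inlet roads have no in-neighbors); the correct lemma is the reducible-substochastic one, namely that a nonnegative matrix with column sums at most $1$ has spectral radius strictly below $1$ if and only if every index has access, in the directed graph of the matrix, to a deficient column; (ii) one period of length $n_c$ may not be enough for mass from every road to reach an outlet, so you should expect only $\left\|\mathbf{\Phi}^m\right\|_1<1$ for some integer $m\geq 1$ rather than $m=1$, which still suffices for the geometric series. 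With those amendments your proof goes through and is, in my view, a sounder argument than the one printed in the paper.
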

\textbf{Proof:} According to the Gershgorin circle theorem\cite{horn2012matrix}, every eigenvalue of matrix $\mathbf{Q}\left(\zeta[k]\right)-\mathbf{I}$ lies within at least one of the Gershgorin discs $D(-1,1)$. Because entries of matrix $\mathbf{P}\left(\zeta[k]\right)$ are all in the interval $(0,1]$, eigenvalues of matrix $\mathbf{A}\left(\zeta[k]\right)$ must be located within the discs $D(0,1)$\cite{boyd2018Algebra}. If this is not satisfied and some of the eigenvalues of $\mathbf{A}\left(\zeta[k]\right)$  are $1$, then, $0$ is an element of the spectrum of matrix $\mathbf{Q}\left(\zeta[k]\right)-\mathbf{I}$, which indicates that the matrix $\mathbf{Q}\left(\zeta[k]\right)-\mathbf{I}$ is not full rank, i.e. $\mathrm{rank}(\mathbf{Q}\left(\zeta[k]\right)-\mathbf{I})<N$. However, considering the properties \ref{property5} and \ref{property6} of matrix $\mathbf{Q}$, it can be seen that rows of matrix $\mathbf{Q}\left(\zeta[k]\right)-\mathbf{I}$ are independent, which implies that the rank of matrix $\mathrm{rank}(\mathbf{Q}\left(\zeta[k]\right)-\mathbf{I})=N$. Therefore, since the assumption of matrix $\mathbf{A}\left(\zeta[k]\right)$ can not be satisfied, we can draw the conclusion that eigenvalues of matrix $\mathbf{A}$ must located within the discs $D(0,1)$ strictly, i.e. the spectral radius $\rho(\mathbf{A}\left(\zeta[k]\right))<1$. Then, since eigenvalues of matrix $\mathbf{A}\left(\zeta[k]\right)$ are within the unit circle strictly at each discrete time $k$, the traffic dynamics \eqref{singlestepdyn} is BIBO stable\cite{gu2012discrete,Liu2021ConservationBased}.

\begin{theorem}\label{LivenesTheorem}
If Properties \ref{property3}-\ref{property7} of matrices $\mathbf{P}\left(\zeta[k]\right)$ and $\mathbf{Q}\left(\zeta[k]\right)$ are all satisfied, then, liveness condition \eqref{Liveness} is satisfied.
\end{theorem}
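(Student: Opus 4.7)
The plan is to combine the BIBO stability just proved with a global mass-conservation identity for the CTM update \eqref{eq:update}. Since $\mathbf{A}(\zeta[k])$ is $n_c$-periodic in $k$ and every slice has spectral radius strictly less than one, I would first pass to the lifted one-period dynamics whose monodromy matrix $\Phi=\mathbf{A}(n_c-1)\cdots\mathbf{A}(0)$ is a product of strict contractions in a suitable matrix norm. Standard arguments for periodic stable linear systems then give convergence of $\mathbf{x}[k]$ to a unique $n_c$-periodic steady-state trajectory $\mathbf{x}^\star[k]=\mathbf{x}^\star[k+n_c]$ for any admissible input sequence satisfying \eqref{CCC}.

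Next I would sum the scalar update \eqref{eq:update} over every $i\in\mathcal{V}$ and exploit Property \ref{property7}: the column sums of $\mathbf{Q}(\zeta[k])$ equal $1$ on non-outlet roads and $0$ on outlet roads, so every internal transfer $z_j$ cancels against the corresponding inflow contribution at its destination. Combining this cancellation with \eqref{C2} yields the global conservation identity
\[
\sum_{i\in\mathcal{V}}\rho_i[k+1]-\sum_{i\in\mathcal{V}}\rho_i[k]\;=\;u_0-\sum_{i\in\mathcal{V}_{out}}z_i[k].
\]
Evaluating this identity along the periodic steady-state trajectory $\mathbf{x}^\star$ and telescoping over one full cycle $k_0,\ldots,k_0+n_c-1$ annihilates the left-hand side, leaving $\sum_{k=k_0}^{k_0+n_c-1}\sum_{i\in\mathcal{V}_{out}}z_i^\star[k]=n_c\,u_0$. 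Thus the cycle average of the total outlet outflow is exactly $u_0$ on the attractor.

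To promote this cycle-average equality to the pointwise liveness claim, I would invoke the elementary observation that a finite sequence of mean $u_0$ must contain an element within the maximum within-cycle swing of $u_0$; taking $\epsilon$ equal to that swing along the periodic trajectory (a priori finite by the FD upper bound \eqref{fnd3}) and combining with the convergence $\mathbf{x}[k]\to\mathbf{x}^\star[k]$ together with continuity of the linear readout $z_i[k]=p_{i,\lambda_{\zeta[k]}}\rho_i[k]$ produces, beyond any prescribed time, a discrete instant at which $\bigl|\sum_{i\in\mathcal{V}_{out}}z_i[k]-u_0\bigr|<\epsilon$, which is precisely the eventuality \eqref{Liveness}. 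The main obstacle I foresee is making $\epsilon$ sharp and explicit from the system data rather than merely finite: one would like a Floquet-type characterisation of $\Phi$ read against the outlet-outflow map so that $\epsilon$ captures only the residual swing around the balanced cycle-average rather than the loose FD cap.
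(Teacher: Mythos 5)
Your proposal follows essentially the same backbone as the paper's proof: the paper also derives the global mass-conservation identity by left-multiplying the rewritten dynamics $\mathbf{x}[k+1]=\mathbf{x}[k]+\left(\mathbf{Q}\left(\zeta[k]\right)-\mathbf{I}\right)\mathbf{z}[k]+\mathbf{B}\mathbf{u}[k]$ by $\mathbf{1}_{1\times N}$ and using Property \ref{property7} so that all internal transfers cancel, leaving exactly your identity $u_0-\sum_{i\in\mathcal{V}_{out}}z_i[k]$ on the right-hand side. Where you genuinely differ is the final extraction step. The paper invokes BIBO stability to assert $\left|\mathbf{1}_{1\times N}\left(\mathbf{x}[k+1]-\mathbf{x}[k]\right)\right|<\delta_1$ for all $k\geq k_s$ and concludes the bound pointwise with $\epsilon=\delta_1+\delta_2$; you instead converge to the $n_c$-periodic attractor, telescope the identity over one full cycle to obtain the exact cycle-average balance, and then extract an instant per cycle. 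Your version is the more careful of the two: because the steady state of a periodically switched system is a periodic orbit rather than a fixed point, the one-step increment $\mathbf{x}[k+1]-\mathbf{x}[k]$ need not become small, so the paper's $\delta_1$ must silently absorb the within-cycle swing --- precisely the quantity you isolate and fold into $\epsilon$. The trade-off is that you only certify the eventuality form of \eqref{Liveness} (an instant within each cycle) rather than the paper's claimed bound at every $k\geq k_s$; for the $\Diamond$ specification as written, that suffices. One caveat on your opening step: a strict spectral-radius bound on each slice $\mathbf{A}\left(\zeta\right)$ does not by itself make the monodromy product a contraction, since there need not be a single norm contracting all slices simultaneously; you should either exploit the substochastic column-sum structure of $\mathbf{A}$ directly or simply cite the paper's BIBO theorem as a black box rather than the ``product of strict contractions'' phrasing.
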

\begin{proof}
The traffic dynamics \eqref{singlestepdyn} can be rewritten as
\begin{equation}\label{singlestepdyn2}
    \mathbf{x}[k+1]=\mathbf{x}[k]+\left(\mathbf{Q}\left(\zeta[k]\right)-\mathbf{I}\right)\mathbf{z}[k]+\mathbf{B}\mathbf{u}[k],\qquad \zeta=\sigma,~k\in \mathbb{N}.
\end{equation}
Therefore,
\begin{equation}\label{singlestepdyn22}
    \mathbf{1}_{1\times N}\left(\mathbf{x}[k+1]-\mathbf{x}[k]\right)=\mathbf{1}_{1\times N}\left(\mathbf{Q}\left(\zeta[k]\right)-\mathbf{I}\right)\mathbf{z}[k]+u_0,~ \zeta=\sigma.
\end{equation}
where $\mathbf{1}_{1\times N}\in\mathbb{R}^{1\times N}$ is a row vector, with entries that are all $1$, $u_0=\mathbf{1}_{1\times N}\mathbf{B}\mathbf{u}[k]$ is constant at every discrete time $k$ per condition \eqref{C2}. Because traffic dynamics \eqref{singlestepdyn} is BIBO stable, there exists a discrete time $k_s$ such that
\begin{subequations}
\begin{equation}
    \left|\mathbf{1}_{1\times N}\left(\mathbf{x}[k+1]-\mathbf{x}[k]\right)\right|<\delta_1,\qquad \forall k\geq k_s,
\end{equation}
\begin{equation}
    \left|\mathbf{1}_{1\times N}\left(\mathbf{Q}\left(\zeta[k]\right)-\mathbf{I}\right)\mathbf{z}[k]-\sum_{i\in \mathcal{V}_{out}}z_i[k]\right|<\delta_2,\qquad \forall k\geq k_s.
\end{equation}
\end{subequations}
Note the $z_i[k]>0$ at every time $k$, therefore,  
\begin{equation}\label{Liveness}
  \left|\bigwedge_{i\in \mathcal{V}_{out}}z_i[k]-u_0\right|<\epsilon =\delta_1+\delta_2\qquad \forall k\geq k_s.
\end{equation}
\end{proof}
\section{Traffic Congestion Control}\label{Traffic Congestion Control}
We use MPC to determine the boundary control $\mathbf{u}[k]$ at every discrete time $k$ by solving a quadratic programming problem with a quadratic cost  and linear constraints imposing the feasibility conditions into management of traffic coordination. To this end, we first define matrix multiplication process
\begin{equation}
    \mathbf{H}\left(\zeta[k+i]\right)=\mathbf{H}\left(\zeta[k+i-1]\right)\mathbf{A}\left(\zeta[k]\right),\qquad i\in \sigma.
\end{equation}
subject to 
\begin{equation}
    \mathbf{H}\left(\zeta[k]\right)=\mathbf{I}.
\end{equation}
We then apply \eqref{singlestepdyn} to predict the traffic evolution within the next $n_c$ sampling times by 
\begin{equation}\label{iterative model}
    \mathbf{X}[k]=\mathbf{G}_1\left(\zeta[k]\right)\mathbf{x}[k]+\mathbf{G}_2\left(\zeta[k]\right)\mathbf{U}[k],\qquad k\in \mathbb{N},~\zeta\in \sigma
\end{equation}
where $\mathbf{X}[k]=\begin{bmatrix}
    \mathbf{x}^\mathsf{T}[k+1]&\cdots&\mathbf{x}^\mathsf{T}[k+n_c]\end{bmatrix}^\mathsf{T}
    $, $\mathbf{U}[k]=\begin{bmatrix}
    \mathbf{u}^\mathsf{T}[k]&\cdots&\mathbf{x}^\mathsf{T}[k+n_c-1]\end{bmatrix}^\mathsf{T}$, and matrices $\mathbf{G}_1\left(\zeta[k]\right)$ and $\mathbf{G}_2\left(\zeta[k]\right)$ are defined by \eqref{MatrixG}  and \eqref{matrixH}, respectively. 
\begin{strip}
    \begin{subequations}\label{state iternative model}
\begin{equation}\label{MatrixG}
    \mathbf{G}_1\left(\zeta[k]\right)=\begin{bmatrix}
    \mathbf{H}\left(\zeta[k+1]\right)\\
    \vdots\\
    \mathbf{H}\left(\zeta[k+n_c]\right)
    \end{bmatrix}
    \in \mathbb{R}^{Nn_c\times N},\qquad k\in \mathbb{N},~\zeta \in \sigma
    ,
\end{equation}
\begin{equation}\label{matrixH}
    \mathbf{G}_2\left(\zeta[k]\right)=\begin{bmatrix}
    \mathbf{H}\left(\zeta[k]\right) & \mathbf{0} & \mathbf{0} &\cdots & \mathbf{0}\\
    \mathbf{H}\left(\zeta[k+1]\right) & \mathbf{H}\left(\zeta[k]\right) & \mathbf{0} & \cdots & \mathbf{0}\\
    \mathbf{H}\left(\zeta[k+2]\right) & \mathbf{H}\left(\zeta[k+1]\right) & \mathbf{H}\left(\zeta[k]\right) & \cdots & \mathbf{0}\\
    \vdots & \vdots & \vdots & &\vdots \\
    \mathbf{H}\left(\zeta[k+n_c-1]\right) & \mathbf{H}\left(\zeta[k+n_c-2]\right)& \mathbf{H}\left(\zeta[k+n_c-3]\right) & \cdots & \mathbf{H}\left(\zeta[k]\right)
    \end{bmatrix}
    \left(\mathbf{B}\otimes \mathbf{1}_{1\times n_c}\right)\in \mathbb{R}^{Nn_c\times Nn_c},\qquad k\in \mathbb{N},~\zeta \in \sigma
\end{equation}
\end{subequations}
\end{strip}
In Eq. \eqref{matrixH} $\otimes$ is the Kronecker product symbol and $\mathbf{1}_{1\times n_c}\in \mathbb{R}^{1\times n_c}$ is a row vector with the entries that are all $1$.

The cost function $\mathrm{J}$, previously defined in  \eqref{cost function}, can be rewritten as follows:
\begin{equation}\label{cost function state}
\begin{split}
\mathrm{J}\left(\mathbf{U}[k],\zeta[k]\right)=&
{1\over 2}\mathbf{U}^\mathsf{T}[k]\mathbf{W}_1\left(\zeta[k]\right)\mathbf{U}[k]+\mathbf{W}_2^\mathsf{T}\left(\zeta[k]\right)\mathbf{U}[k]\\+&\mathbf{W}_3\left(\zeta[k]\right),
\end{split}
\end{equation}
where 
\begin{subequations}
    \begin{equation}
    \mathbf{W}_1\left(\zeta[k]\right)=\mathbf{I}+\beta\mathbf{G}_2^\mathsf{T}\left(\zeta[k]\right)\mathbf{G}_2\left(\zeta[k]\right),
    \end{equation}
    \begin{equation}
    \mathbf{W}_2^\mathsf{T}\left(\zeta[k]\right)=\beta\mathbf{x}^\mathsf{T}[k]\mathbf{G}_1^\mathsf{T}\left(\zeta[k]\right)\mathbf{G}_2\left(\zeta[k]\right),
    \end{equation}
    \begin{equation}
    \mathbf{W}_3\left(\zeta[k]\right)={1\over 2}\beta\mathbf{x}^\mathsf{T}[k]\mathbf{G}_1^\mathsf{T}\left(\zeta[k]\right)\mathbf{G}_1\left(\zeta[k]\right)\mathbf{x}[k].
    \end{equation}
\end{subequations}
Note that $\mathbf{W}_3\left(\zeta[k]\right)$ can be removed from cost function  \eqref{cost function state} because it does not depend on $\mathbf{U}[k]$ at every discrete time $k$. Therefore, 
\begin{equation}\label{cost function final}
    \mathrm{J}'={1\over 2}\mathbf{U}^\mathsf{T}[k]\mathbf{W}_1\left(\zeta[k]\right)\mathbf{U}[k]+\mathbf{W}_2^\mathsf{T}\left(\zeta[k]\right)\mathbf{U}[k]
\end{equation}
is considered as the cost function of traffic coordination, and the optimal control variable
\begin{equation}
    \mathbf{u}^*[k]=\begin{bmatrix}
    \mathbf{I}_{N_{in}}&\mathbf{0}_{N_{in}\times N_{in}\left(N_c-1\right)}
    \end{bmatrix}
    \mathbf{U}^*[k]
\end{equation}
is assigned by determining $\mathbf{U}^*[k]$ by solving of the following optimization problem:
\[
\min \mathrm{J}'=\min~\left({1\over 2}\mathbf{U}^\mathsf{T}[k]\mathbf{W}_1\left(\zeta[k]\right)\mathbf{U}[k]+\mathbf{W}_2^\mathsf{T}\left(\zeta[k]\right)\mathbf{U}[k]\right)
 \]
subject to
\begin{subequations}\label{feasibility constrains}
\begin{equation}\label{iterative model}
    -\mathbf{G}_2\left(\zeta[k]\right)\mathbf{U}[k]\leq \mathbf{G}_1\left(\zeta[k]\right)\mathbf{x}[k],
\end{equation}
\begin{equation}\label{iterative model}
    \mathbf{G}_2\left(\zeta[k]\right)\mathbf{U}[k]\leq -\mathbf{G}_1\left(\zeta[k]\right)\mathbf{x}[k]+\bar{\rho}_{max}\mathbf{1}_{Nn_c\times 1},
\end{equation}
\begin{equation}\label{inequality condition}
     \left(\mathbf{W}_4\left(\zeta[k]\right)-{\bar{z}_{max}\over \bar{\rho}_{min}}\mathbf{I}\right)\mathbf{G}_2\left(\zeta[k]\right)\mathbf{U}[k]\leq -\mathbf{G}_1\left(\zeta[k]\right)\mathbf{x}[k],
\end{equation}
\begin{equation}\label{inequality condition}
     \mathbf{W}_4\left(\zeta[k]\right)\mathbf{G}_2\left(\zeta[k]\right)\mathbf{U}[k]\leq -\mathbf{G}_1\left(\zeta[k]\right)\mathbf{x}[k]+\mathbf{v}_1,
\end{equation}
\begin{equation}\label{inequality condition}
\begin{split}
    \left(\mathbf{W}_4\left(\zeta[k]\right)+{\bar{z}_{max}\over \bar{\rho}_{max}-\bar{\rho}_{mid}}\mathbf{I}\right)\left(\zeta[k]\right)\mathbf{G}_2\left(\zeta[k]\right)\mathbf{U}[k]&\leq\\ -\mathbf{G}_1\left(\zeta[k]\right)\mathbf{x}[k]+\mathbf{v}_2&,
\end{split}
\end{equation}
\begin{equation}\label{equality condition equation}
    \mathbf{I}_{n_c}\otimes\mathbf{1}_{1\times N_{in}}\mathbf{U}[k]=-u_0\mathbf{1}_{n_c\times 1},
\end{equation}
\end{subequations}
where
\begin{subequations}
\begin{equation}
\mathbf{W}_4\left(\zeta[k]\right)=
    \begin{bmatrix}
    \mathbf{P}\left(\zeta[k+1]\right)&\cdots&\mathbf{0}\\
    \vdots&\ddots&\vdots\\
    \mathbf{0}&\cdots&\mathbf{P}\left(\zeta[k+n_c]\right)
    \end{bmatrix}
    \in \mathbb{R}^{Nn_c\times Nn_c},
\end{equation}
\begin{equation}
\mathbf{v}_1=\bar{z}_{max}\mathbf{1}_{Nn_c},
\end{equation}
\begin{equation}
\mathbf{v}_2={\bar{z}_{max}\bar{\rho}_{max}\over \bar{\rho}_{max}-\bar{\rho}_{mid}}\mathbf{1}_{Nn_c}.
\end{equation}
 \end{subequations}

\begin{table}
\begin{adjustbox}{max width=.5\textwidth,center}
    \centering
\begin{tabular}{|M{0.18cm}|c|M{0.9cm}|M{0.18cm}|c|M{0.9cm}|}
  \hline
    ID & Name & Direction & ID & Name & Direction  \\
   \hline
   1 & N10th St.(E McKinley St.- E Pierce St.)& S & 2 &N11th St.(E McKinley St.- E Pierce St.)& S \\
    \hline
    3 & E Fillmore St.(N12th St.- N13th St.)& W & 4 &E Pierce St.(N12th St.- N13th St.)& W \\
    \hline
    5 & E Taylor St.(N12th St.- N13th St.)& W & 6 &E Fillmore St.(N7th St.- N9th St.)& E \\
    \hline
    7 & N9th St.(E Taylor St.- E Polk St.)& N & 8 &N11th St.(E Taylor St.- E Polk St.)& N \\
    \hline
    9 & N12th St.(E McKinley St.- E Pierce St.)& S & 10 &E Pierce St.(N7th St.- N9th St.)& E \\
    \hline
    11 & N10th St.(E Taylor St.- E Polk St.)& N & 12 &E Fillmore St.(N7th St.- N9th St.)& W \\
    \hline
    13 & N9th St.(E Taylor St.- E Polk St.)& S & 14 &N11th St.(E Taylor St.- E Polk St.)& S \\
    \hline
    15 &N12th St.(E McKinley St.- E Pierce St.)& N & 16 &E Pierce St.(N7th St.- N9th St.)& W \\
    \hline
    17 & N10th St.(E Taylor St.- E Polk St.)& S & 18 &N10th St.(E McKinley St.- E Pierce St.)& N \\
    \hline
    19 & N11th St.(E McKinley St.- E Pierce St.)& N & 20 &E Fillmore St.(N12th St.- N13th St.)& E \\
    \hline
    21 & E Pierce St.(N12th St.- N13th St.)& E &22 &E Taylor St.(N12th St.- N13th St.)& E \\
    \hline
    23 & N9th St.(E Fillmore St.- E Taylor St.)& S & 24 &N9th St.(E Fillmore St.- E Taylor St.)& S \\
    \hline
    25 & N9th St.( E Pierce St.- E Fillmore St.)& S & 26 &N11th St.(E Fillmore St.- E Taylor St.)& S \\
    \hline
    27 & N11th St.( E Pierce St.- E Fillmore St.)& S & 28 &N12th St.(E Pierce St.- E Fillmore St.)& N \\
    \hline
    29 & N12th St.(E Fillmore St.- E Taylor St.)& N & 30 &E Fillmore St.(N9th St.- N10th St.)& W \\
    \hline
    31 &E Fillmore St.(N10th St.- N11th St.)& W & 32 &E Fillmore St.(N11th St.- N12th St.)& W \\
    \hline
    33 &E Pierce St.(N9th St.- N10th St.)& W & 34 &E Pierce St.(N10th St.- N11th St.)& W \\
    \hline
    35 & E Pierce St.(N10th St.- N11th St.)& W & 36 &E Pierce St.(N11th St.- N12th St.)& W \\
    \hline
    37 & N10th St.(E Fillmore St.- E Taylor St.)& S & 38 &N10th St.(E Pierce St.- E Fillmore St.)& S \\
    \hline
    39 &  E Taylor St.(N9th St.- N10th St.)& W & 40 &E Taylor St.(N10th St.- N11th St.)& W \\
    \hline
    41 & E Taylor St.(N11th St.- N12th St.)& W & 42 &N9th St.(E Fillmore St.- E Taylor St.)& N \\
    \hline
    43 & N9th St.(E Fillmore St.- E Taylor St.)& N & 44 &N9th St.( E Pierce St.- E Fillmore St.)& N \\
    \hline
    45 &N11th St.(E Fillmore St.- E Taylor St.)& N & 46 &N11th St.( E Pierce St.- E Fillmore St.)& N \\
    \hline
    47 & N12th St.(E Pierce St.- E Fillmore St.)& S & 48 &N12th St.(E Fillmore St.- E Taylor St.)& S \\
    \hline
    49 &E Fillmore St.(N9th St.- N10th St.)& E & 50 &E Fillmore St.(N10th St.- N11th St.)& E \\
    \hline
    51 & E Fillmore St.(N11th St.- N12th St.)& E &52 &E Pierce St.(N9th St.- N10th St.)& E \\
    \hline
    53 & E Pierce St.(N10th St.- N11th St.)& E & 54 &E Pierce St.(N10th St.- N11th St.)& E \\
    \hline
    55 & E Pierce St.(N11th St.- N12th St.)& E & 56 &N10th St.(E Fillmore St.- E Taylor St.)& N \\
    \hline
    57 & N10th St.(E Pierce St.- E Fillmore St.)& N & 58 &E Taylor St.(N9th St.- N10th St.)& E \\
    \hline
    59 & E Taylor St.(N10th St.- N11th St.)& E & 60 &E Taylor St.(N11th St.- N12th St.)& E \\
    \hline
\end{tabular}
\end{adjustbox}
\caption{Road elements of the example NOIR of Phoenix City}
\label{tab:road_of_phoenix_NOIR}
\end{table}

\begin{table}[]
\caption{Number of movement phases at every junction $i\in \mathcal{W}$}
\begin{adjustbox}{max width=.5\textwidth,center}
    \centering
    \begin{tabular}{|c|c|c|c|c|c|c|c|c|c|c|c|c|c|}
    \hline
   \multicolumn{14}{|c|}{Number of movement phases at junction $i$ denoted by $r_i$}\\
   \hline
    $r_1$&$r_2$&$r_3$&$r_4$&$r_5$&$r_6$&$r_7$&$r_8$&$r_9$&$r_{10}$&$r_{11}$&$r_{12}$&$r_{13}$&$r_{14}$\\
    \hline
    3&3&4&4&3&4&4&3&3&3&4&4&3&4\\
    \hline
    \end{tabular}
    \label{tab:my_label2}
\end{adjustbox}
\end{table}

\begin{table}[]
\caption{Active incoming roads at junctions defined by $\mathcal{W}$ over the entire cycle of length $n_c=12$}
\begin{adjustbox}{max width=.5\textwidth,center}
    \centering
    \begin{tabular}{|c|c|c|c|c|c|c|c|c|c|c|c|c|}
    \hline
    &\multicolumn{12}{|c|}{Active movement phases over a cycle}\\
    $i\in \mathcal{W}$&$k+1$&$k+2$&$k+3$&$k+4$&$k+5$&$k+6$&$k+7$&$k+8$&$k+9$&$k+10$&$k+11$&$k+12$\\
    \hline
    1&     1&    34   & 52&     1   & 34&    52   &  1&    34   & 52&     1   & 34&    52\\
    \hline
    2&    48 &    5  &  60 &   48  &   5 &   60  &  48 &    5  &  60 &   48  &   5 &   60\\
       \hline
    3&    47  &   3 &   29  &  51 &   47  &   3 &   29  &  51 &   47  &   3 &   29  &  51\\
      \hline
    4&     9   &  4&    28   & 55&     9   &  4&    28   & 55&     9   &  4&    28   & 55\\
 \hline
    5&    25    &30    &43    &25    &30    &43    &25    &30    &43    &25    &30    &43\\
 \hline
    6&    38&    31   & 56&    49   & 38&    31   & 56&    49   & 38&    31   & 56&    49\\
            \hline
    7&    27 &   32  &  45 &   50  &  27 &   32  &  45 &   50  &  27 &   32  &  45 &   50\\
            \hline
    8&    24  &  42 &    6  &  24 &   42  &   6 &   24  &  42 &    6  &  24 &   42  &   6\\
            \hline
    9&    23   & 39&     7   & 23&    39   &  7&    23   & 39&     7   & 23&    39   &  7\\
            \hline
    10&    33    &44    &10    &33    &44    &10    &33    &44    &10    &33    &44    &10\\
            \hline
    11&    26&    41   &  8&    59   & 26&    41   &  8&    59   & 26&    41   &  8&    59\\
            \hline
    12&     2 &   36  &  46 &   54  &   2 &   36  &  46 &   54  &   2 &   36  &  46 &   54\\
             \hline
    13&    35  &  57 &   53  &  35 &   57  &  53 &   35  &  57 &   53  &  35 &   57  &  53\\
            \hline
    14&    37   & 40&    11   & 58&    37   & 40&    11   & 58&    37   & 40&    11   & 58\\
            \hline
    \end{tabular}
    \label{tab:my_label}
    \end{adjustbox}
\end{table}

\begin{table}[]
\caption{Simulation parameters for every  $i\in\mathcal{V}$}
\begin{adjustbox}{max width=.5\textwidth,center}
    \centering
    \begin{tabular}{|c|c|c|c|c|}
    \hline
    $\bar{z}_{i,max}$&$\bar{\rho}_{i,min}$&$\bar{\rho}_{i,mid}$&$\bar{\rho}_{i,max}$&$u_0$\\
    \hline
    $20$&$20$&$40$&$55$&$50$\\
    \hline
    \end{tabular}
    \label{paramerte}
\end{adjustbox}
\end{table}

\section{Simulation Results}\label{Simulation Results}
We simulate traffic congestion control in a selected area in Downtown Phoenix with the map and NOIR shown in Fig. \ref{realmap}. The NOIR consisting of $60$ unidirectional roads with the identification numbers defined by set $\mathcal{V}=\left\{1,\cdots,60\right\}$ and the names listed in Table \ref{tab:road_of_phoenix_NOIR}.  Interconnections between roads are defined by   $\mathcal{G}\left(\mathcal{V},\mathcal{E}\right)$ with  node set $\mathcal{V}$ and edge set $\mathcal{E}$, where $\mathcal{V}=\mathcal{V}_{in}\bigcup \mathcal{V}_{out}\bigcup \mathcal{V}_I$, $\mathcal{V}_{in}=\left\{1,\cdots,11\right\}$, $\mathcal{V}_I=\left\{12,\cdots,22\right\}$. As shown in  Fig. \ref{realmap},  the NOIR consists of $14$ junctions defined by set $\mathcal{W}=\left\{1,\cdots,14\right\}$. Without loss of generality, for definition of movement phases, we make the following assumption in addition to Assumptions \ref{repeatedmovephase} and \ref{concurrent}:
\begin{assumption}\label{assump3}
   At every discrete time $k\in \mathbb{N}$, traffic can enter junction $i\in \mathcal{W}$ through a single incoming road which is called active incoming road.    
\end{assumption}
By imposing Assumption \ref{assump3}, the number of movement phases is either $3$ or $4$ at every junction $i\in \mathcal{W}$ (see Table \ref{tab:my_label2}). Therefore, the NOIR cycle is completed in $n_c=12$ time steps. Table  \ref{tab:my_label} lists the active \underline{incoming roads} over the NOIR cycle of the length $n_c=12$.
\begin{figure}[h]
\centering
\subfigure[]{\includegraphics[width=.98\linewidth]{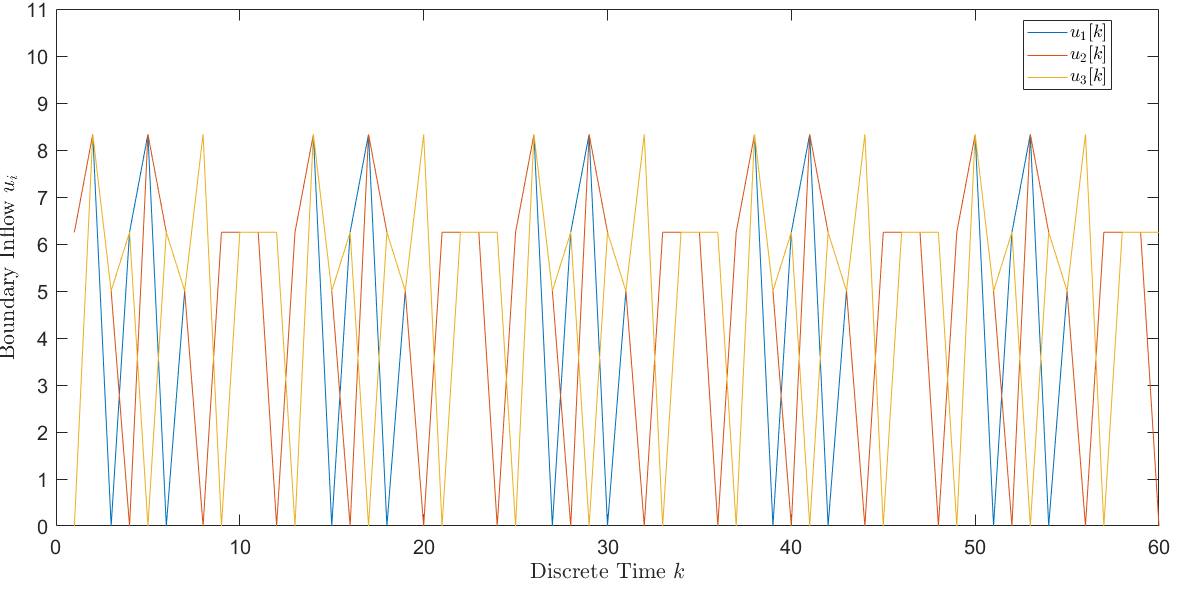}}
\subfigure[]{\includegraphics[width=.98\linewidth]{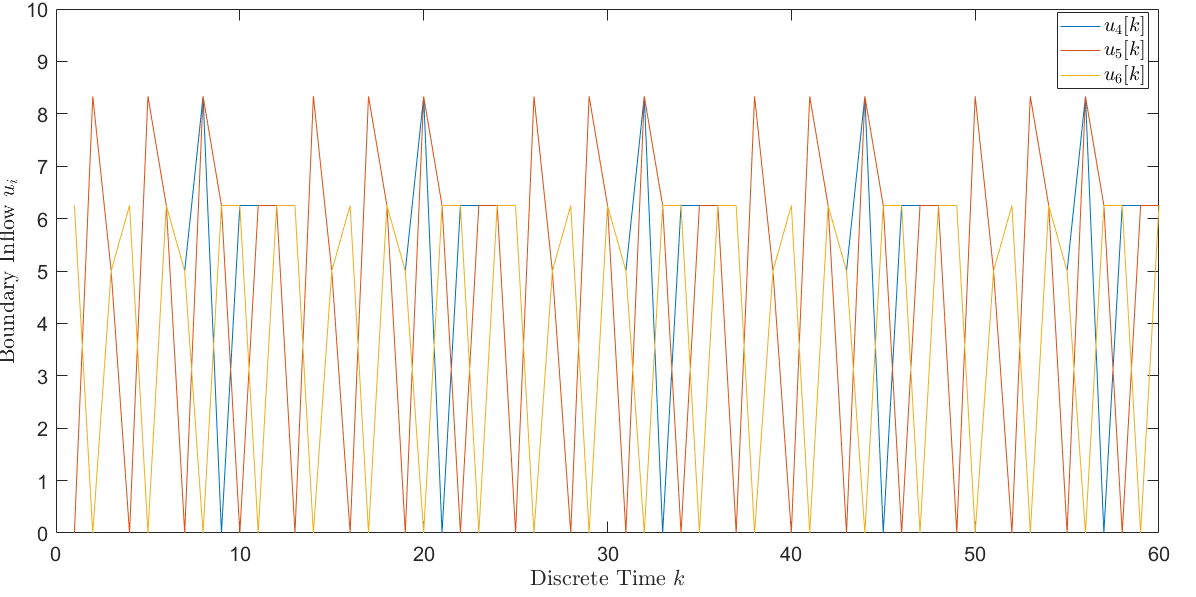}}
\subfigure[]{\includegraphics[width=.98\linewidth]{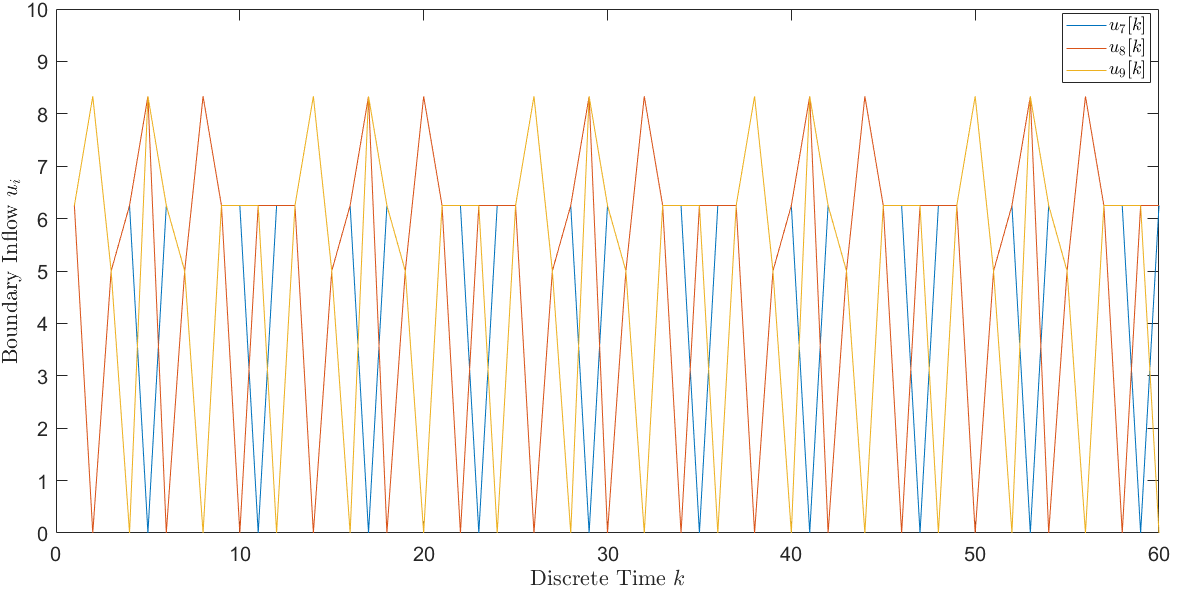}}
\subfigure[]{\includegraphics[width=.98\linewidth]{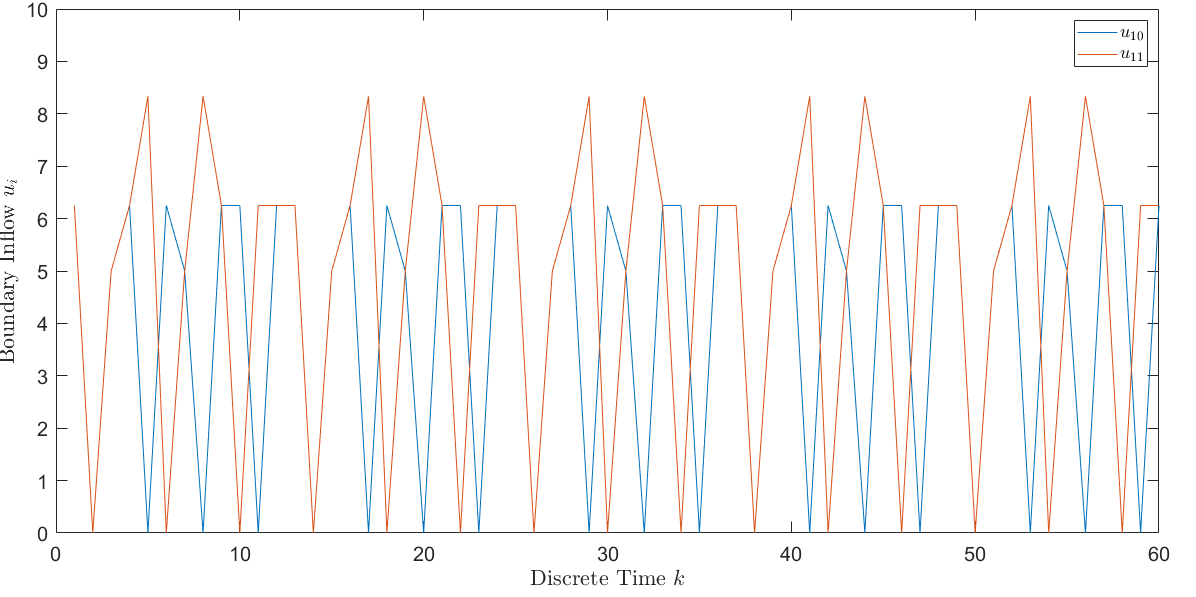}}
\caption{The optimal boundary inflow at inlet roads (a) $1,2,3\in \mathcal{V}_{in}$, (b) $4,5,6\in \mathcal{V}_{in}$, (c) $7,8,9\in \mathcal{V}_{in}$, and (d) $10,11\in \mathcal{V}_{in}$. }
\label{inflow}
\end{figure}

\begin{figure*}[h]
\centering
\subfigure[$k=15$]{\includegraphics[width=.33\linewidth]{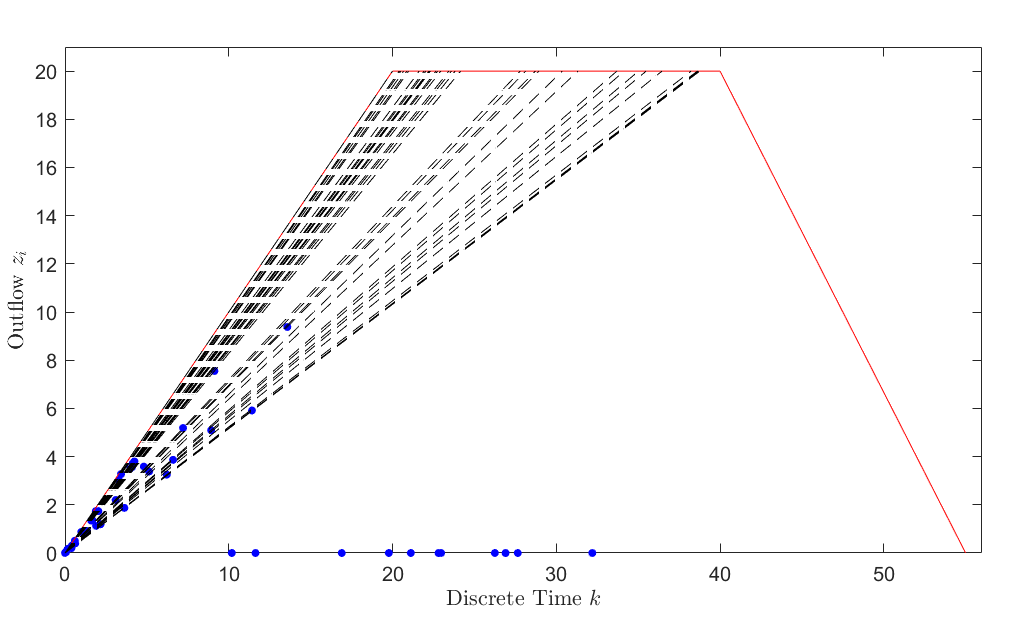}}
\subfigure[$k=30$]{\includegraphics[width=.32\linewidth]{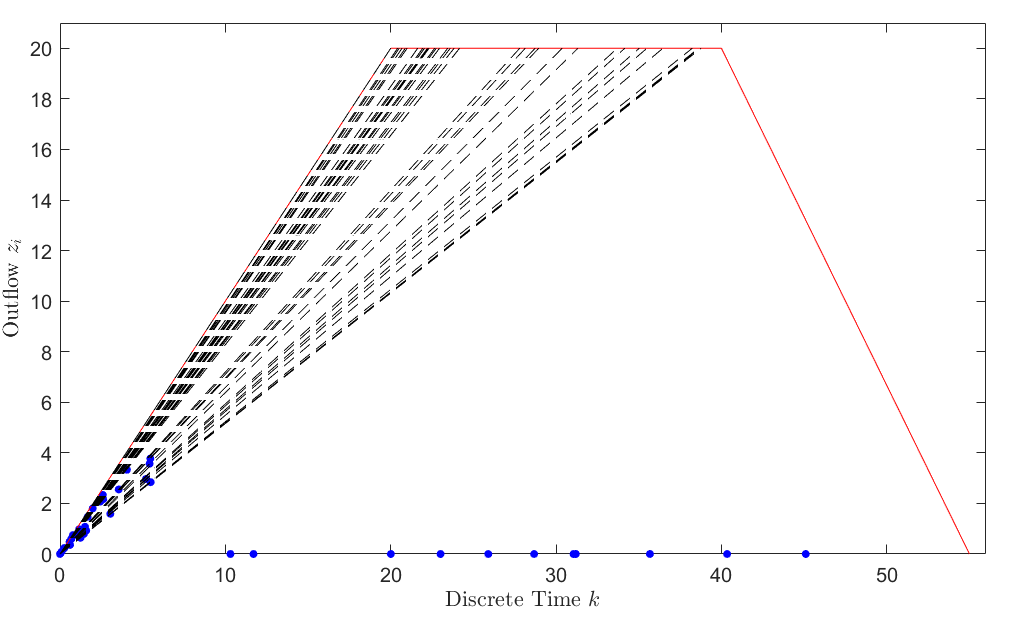}}
\subfigure[$k=50$]{\includegraphics[width=.32\linewidth]{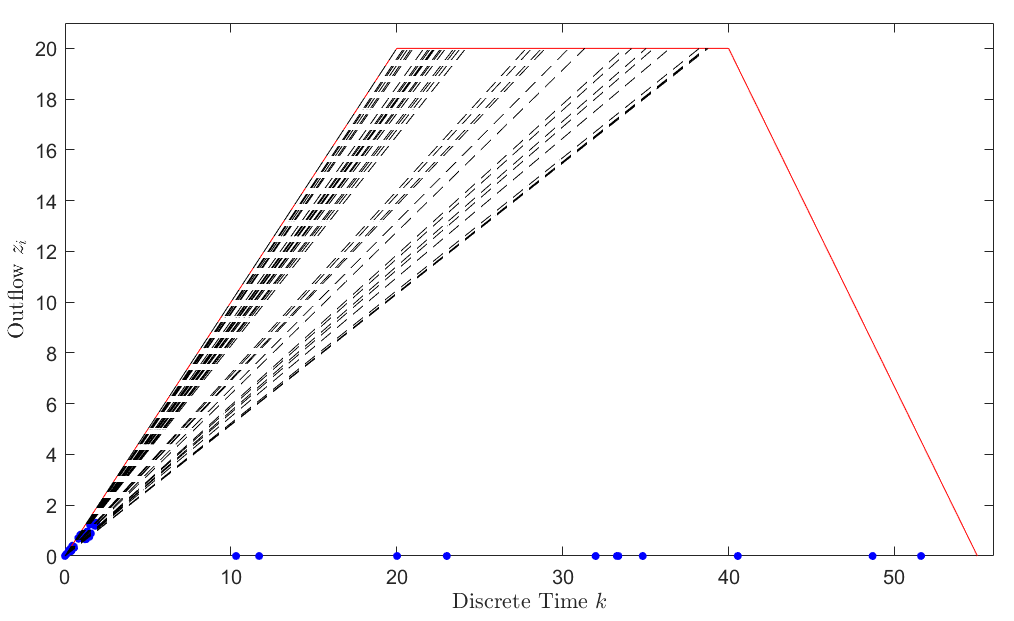}}
\caption{Outflow of the NOIR roads at (a) $k=15$, (b) $k=30$, and (c) $k=50$.}
\label{outflow}
\end{figure*}

\begin{figure}[htb]
\centering
\includegraphics[width=3.0  in]{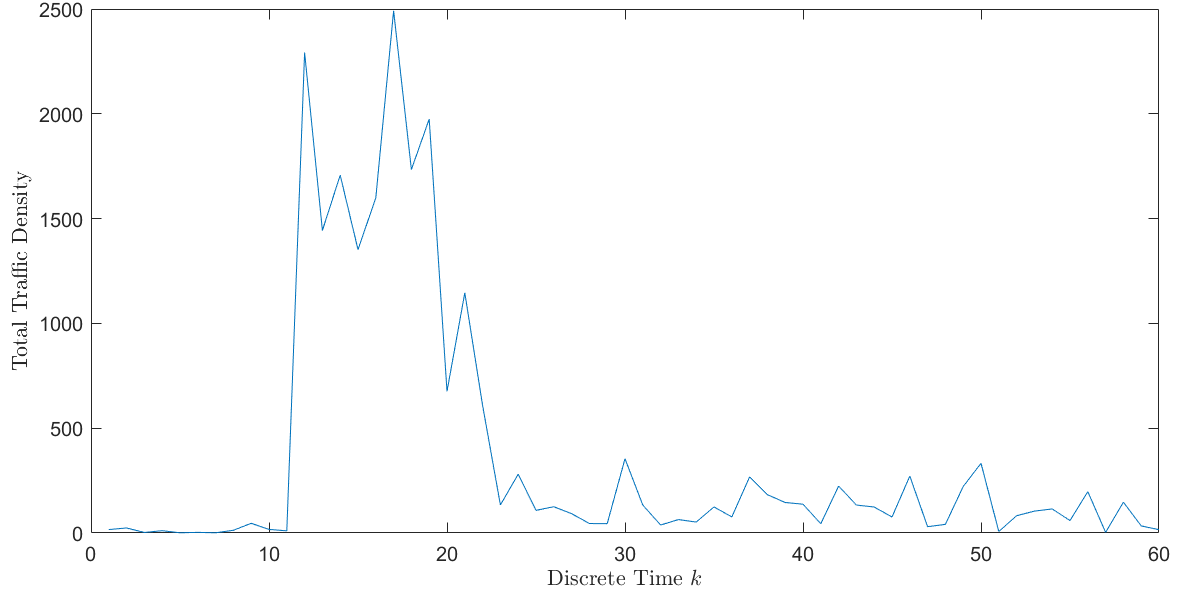}
\caption{Schematix of the cyclic graph $\mathcal{C}_i$ with movement phase repetition. }
\label{totaldensity}
\end{figure}


%
%
For traffic simulation, we consider the same FD to obtain traffic feasibility conditions at every road $i\in \mathcal{W}$ and choose the simulation parameters listed in Table \ref{paramerte}. The optimal boundary inflow is plotted versus discrete time $k$ in Fig. \ref{inflow} for $k=1,\cdots,60$. Fig. \ref{outflow} shows the outflow of $z_i$ of every road road $i\in \mathcal{V}$ at sample times $k=15$, $k=30$, and $k=50$. It is seen that the feasibility conditions imposed by the FD are all satisfied. Also, the net traffic density ($\sum_{i\in \mathcal{V}}\rho_i$) versus discrete time $k$ is plotted in Fig. \ref{totaldensity}.

\IEEEpeerreviewmaketitle

\section{Conclusion}\label{Conclusion}
In this paper, we introduced a  new physics-inspired approach law to model the traffic evolution and control congestion through the boundary roads of a NOIR. By commanding cyclic movement phase rotation at NOIR junctions, we modeled traffic coordination by a switching discrete time dynamics, with deterministic transitions over finite states representing NOIR movement phases. We used a trapezoid FD to formally specify the feasibility and liveness conditions for traffic coordination. The feasibility conditions impose linear equality and inequality constraints on traffic congestion control, which was defined as a receding horizon optimization problem, and can be solved as a quadratic programming problem.



\bibliographystyle{IEEEtran}
\bibliography{references}

\end{document}